\title{Selfish Mining under General Stochastic Rewards} 
\author{Maryam Bahrani}{Ritual}{maryam.bahrani.14@gmail.com}{https://orcid.org/0009-0005-6040-7755}{}
\author{Michael Neuder}{Ethereum Foundation}{michael.neuder@ethereum.org}{https://orcid.org/0000-0001-8813-6524}{}
\author{S. Matthew Weinberg\footnote{During Professor Weinberg's development of this paper, he participated as an expert witness on behalf of the State of Texas in ongoing litigation against Google (the ``Google Litigation'').}}{Princeton University}{smweinberg@princeton.edu}{https://orcid.org/0000-0001-7744-795X}{Supported by NSF CAREER Award CCF-1942497.}
\authorrunning{M. Bahrani, M. Neuder, M. Weinberg} 
\keywords{Proof-of-Work, Selfish Mining, MEV}
\newcommand{\nxt}[1]{\texttt{Next}({#1})}
\newcommand{\nxtb}[1]{\texttt{Next\_Broadcaster}({#1})}
\newcommand{\Rc}{\texttt{Claimed}}
\newcommand{\chain}{\texttt{Chain}}
\newcommand{\ts}{\texttt{Timestamp}}
\newcommand{\tstamp}{\texttt{Timestamp}}
\newcommand{\statezeronosp}{\texttt{State 0}}
\newcommand{\stateone}{\texttt{State 1}}
\newcommand{\stateonenosp}{\texttt{State 1}}
\newcommand{\stateinosp}{\texttt{State i}}
\newcommand{\Ropt}{R_{\text{opt}}}
\newcommand{\Bopt}{B_{\text{opt}}}
\renewcommand{\vec}[1]{\overset{\smash{\raisebox{-0.3ex}{\tiny$\rightharpoonup$}}}{#1}}
\begin{document}
	
	\maketitle
	\begin{abstract}
		
		Selfish miners selectively withhold blocks to earn disproportionately high revenue. The vast majority of the selfish mining literature focuses exclusively on block rewards. 
		\cite{carlsten2016instability} is a notable exception, observing that similar strategic behavior is profitable in a zero-block-reward regime (the endgame for Bitcoin's quadrennial halving schedule) if miners are compensated with transaction fees alone.
		Neither model fully captures miner incentives today. The block reward remains $3.125$ BTC, yet some blocks yield significantly higher revenue. 
		For example, congestion during the launch of the Babylon protocol in August 2024 caused transaction fees to spike from 0.14 BTC to 9.52 BTC, a $68\times$ increase in fees within two blocks.
		
		Our results are both practical and theoretical. Of practical interest, we study selfish mining profitability under a combined reward function that more accurately models miner incentives. This analysis enables us to make quantitative claims about protocol risk (e.g., the mining power at which a selfish strategy becomes profitable is reduced by $22\%$ when optimizing over the combined reward function versus block rewards alone) and qualitative observations (e.g., a miner considering both block rewards and transaction fees will mine more or less aggressively respectively than if they cared about either alone). These practical results follow from our novel model and methodology, which constitute our theoretical contributions. We model general, time-accruing stochastic rewards in the Nakamoto Consensus Game, which requires explicit treatment of difficult adjustment and randomness; we characterize reward function structure through a set of properties (e.g., that rewards accrue only as a function of time since the parent block). We present a new methodology to analytically calculate expected selfish miner rewards under a broad class of stochastic reward functions and validate our method numerically by comparing it with the existing literature and simulating the combined reward sources directly.
	\end{abstract}
	
	\newpage

\section{Introduction}
Blockchain consensus mechanisms rely on incentives to coordinate behavior. To remain safe and live, crypto-economic systems require a majority (as in Proof-of-Work) or a super-majority (as in Proof-of-Stake) of participants to adopt the protocol-specified (sometimes referred to as ``honest'') actions. Selfish mining \cite{eyal2013majority} first demonstrated that this honest behavior might not be incentive compatible for the rational miner who could earn a disproportionately large fraction of block rewards by selectively delaying the publication of their blocks. In the ensuing decade, a rich literature around strategic behavior in consensus protocols developed (e.g., in Ethereum Proof-of-Stake \cite{neuder2021low, schwarz2022three,neu2022two}). The vast majority of this literature focuses on strategies that optimize for the portion of the protocol-assigned rewards earned by the agent. These rewards, sometimes referred to as ``protocol issuance'' or ``consensus rewards,'' have historically accounted for nearly all of the value in consensus participation; this is no longer true.

As modern blockchains gain usage and facilitate more significant economic activity, their decentralized applications generate revenue. Consensus participants can collect some of this revenue through the block producer's ability to arbitrarily re-order, insert, and delete transactions when they are elected leader; \cite{daian2019flash} introduces this concept as Miner/Maximal Extractable Value (abbr. MEV). MEV has been studied theoretically and measured empirically, leading to significant changes in blockchain design. Ethereum best exemplifies this, as over 90\% of its blocks are built using a public, open-outcry block-building auction. The motivation for this auction is grounded in the notion of ``fairness'' of validator rewards. By creating a transparent market for buying and selling transaction orderings, each consensus participant should earn about the same amount of MEV – a principle originally encoded into consensus rewards, which are proportional to investment (measured in either work or stake). 

A separate line of literature studies strategic behavior in decentralized finance (abbr. DeFi), which represents another source of rewards generated at the application layer. For example, loss-versus-rebalancing \cite{lvr} (abbr. LVR) measures the amount of loss incurred by liquidity providers in decentralized exchanges as arbitrageurs balance the price of the decentralized exchange against an infinitely deep centralized exchange. These losses are precisely the profit available to those performing the arbitrage. This model completely abstracts the block creation and consensus processes, only considering the profits available to traders. In reality, the block producer has the final say over the transactions in their block, resulting in a large portion of this value flowing back to the consensus participants themselves.

The perspectives of the selfish mining, MEV, and DeFi literatures are incomplete in isolation. The co-mingling of revenue across the consensus and application layers necessitates a more precise model of rewards and their impact on strategic behavior, as demonstrated in the following real-world examples.

\begin{example}[The launch of Bablyon]\label{ex:babylon}
	On August 22, 2024, the Babylon \cite{tas2023bitcoin} protocol launched on Bitcoin. The launch allowed \texttt{BTC} tokens to be ``locked'' through a transaction processed on the chain. With a cap of 1000 \texttt{BTC}, demand for transaction inclusion spiked as people rushed to be among the first to lock their tokens. This congestion led to a $68\times$ increase in transaction fee revenue from $0.138$ to $9.515$ \texttt{BTC} between parent and child blocks \texttt{857909}, \texttt{857910}; over the four block range of \texttt{857908} to \texttt{857911}, the fee revenue increased by $500\times$ from $0.031$ to $15.551$ \texttt{BTC} \cite{bitcoinblock2025}. This immense growth in transaction fees persisted for only seven blocks, with an average per-block fee revenue of 9.64 \texttt{BTC}, after which the protocol reached its cap and fees returned to baseline levels. For those seven blocks, the block reward of $3.125$ \texttt{BTC}, which normally represents nearly the entire source of miner revenue, was only 25\% of the rewards claimed. Despite the limited scope of Bitcoin applications, Babylon exemplifies how non-protocol-specified rewards can dramatically distort miner incentives.    
\end{example}

\begin{example}[The ``Low-Carb Crusader'']\label{ex:lowcarb}
	Proof-of-Stake differs from Proof-of-Work in that it requires stakers to explicitly lock up capital to participate in the system. While Proof-of-Work is limited only to incentivizing miners with positive rewards, Proof-of-Stake enforces a subset of the protocol rules through the credible threat of destroying the capital owned by a misbehaving staker. Historically, this stick has served as an effective deterrent, but on April 2, 2023, an attacker referred to as the ``Low-Carb Crusader'' exploited a piece of infrastructure in the Ethereum protocol motivated by application layer-generated rewards. By tricking a server facilitating the block building auction referenced above, the attacker accessed private transaction data, which they exploited to $20$ million \texttt{USD} of MEV \cite{lowcarb}. In the Ethereum specification, this behavior violated the rules and thus was subject to a slashing penalty of $1$ \texttt{ETH} ($2600$ USD at current prices) levied against the attacker's stake. Clearly, the consensus reward and penalty mechanism could not account for this magnitude of profit arising from the application layer. This example demonstrates the risk facing consensus mechanisms, where non-honest behaviors are incentivized with multi-million dollar exogenous rewards originating from the application layer. 
\end{example}

These examples shows how the economic value generated in the application layer bleeds into the consensus layer rewards; see \Cref{app:timing-games} for a discussion on ``timing games,'' which is another source of revenue for consensus participants (particularly in Proof-of-Stake). To fully understand consensus incentives, a more general model for rewards is needed. In particular, a more accurate view of rewards would capture the aggregate incentives for following a specific strategy under many distinct revenue streams. The present work was motivated by that reality and takes the first step toward modeling general stochastic rewards in longest-chain protocols.

\subsection{Related work}\label{subsec:relwork}
Combining the proportion of block rewards and the linear-in-time transaction fee models of \cite{eyal2013majority} and \cite{carlsten2016instability} was the initial motivation for this work. We build upon their Markov Chains to analyze expected attacker rewards and study the $\beta$-cutoff strategies for selfish mining. As previously noted, neither work captures the state of the world in 2025; the fundamental question of `how vulnerable is Bitcoin to Selfish Mining now?' remains unanswered and of interest to the research community; this work seeks to address this gap in the literature.
\cite{ZurAET23} demonstrates how large ``whale transaction'' fees in conjunction with the standard block rewards may result in attacker profitability at lower hashrates. They use reinforcement learning to approximate the optimal policy and profit for attackers.
We also model these rewards as granting bonus value to blocks depending on the outcome of a Bernoulli trial. Our framework (\Cref{sec:generalstatic}) accommodates much more general rewards, and our instantiation (\Cref{sec:multiplerewards}) includes a third source – linear-in-time transaction fees. Further, we analytically solve for the profitability of strategies rather than approximating them.

The selfish mining literature has grown extensively in the past dozen years; see \cite{finkbeiner2025sok} for a recent survey. \cite{nayak2016stubborn,sapirshtein2017optimal,hou2019squirrl} generalized the basic selfish mining strategy to broader strategy spaces. \cite{tsabary2018gap} studied the effect of the relative sizes of block rewards and transaction fees and the impact on miners' decisions on when to mine; \cite{goren2019mind} extended that analysis and show that if all miners are rationale, the equilibrium hash rates will be far below the maximal capacity. \cite{brown2019formal} demonstrated that longest chain Proof-of-Stake protocols would also be vulnerable to selfish mining – a result instantiated through numerous selfish strategies in various staking protocols: \cite{neuder2021low,schwarz2022three,neu2022two} in Ethereum, \cite{ferreira2022optimal,ferreira2024computing} in Algorand's cryptographic self-selection, \cite{neuder2019selfish,neuder2020defending} in Tezos. We extend our model of the Nakamoto Consensus Game from \cite{bahrani2024undetectable}, which studies the detectability of selfish mining in Proof-of-Work.

MEV is one of the most relevant topics existing blockchains are reckoning with; we focus how MEV impacts consensus mechanisms. \cite{daian2019flash} coined the term and introduced many of the key properties of MEV in permissionless systems. \cite{yang2022sok} systematized MEV strategies and proposed mitigations. \cite{bahrani2024transaction,capponi2024proposer,gupta2023centralizing} focused on the centralizing nature of MEV and how Ethereum's block building market is implemented through ``Proposer-Builder Separation.'' \cite{oz2023time, schwarz2023time} studied timing games and their impact on consensus. \cite{yang2024decentralization,oz2024wins} empirically analyzed Ethereum block builders and how the market structure has evolved. We also draw on the DeFi literature when considering application-generated revenue for consensus participants. We focus on arbitrage profits as captured in LVR \cite{lvr}. \cite{lvr-fees} extends the original model to capture trading fees. 

\subsection{Summary of results}
We partition our results into two sets: practical and theoretical.

Despite the analysis of selfish mining under block rewards and transaction fees alone being several years old (\cite{eyal2013majority,carlsten2016instability}), there remains a glaring hole in the literature to study selfish behavior under the combined rewards. Quoting from \cite{finkbeiner2025sok}, a recent selfish mining SoK, ``we find that only 3 works include transaction fees in their modeling; 2 consider both block rewards and transaction fees.'' 
As described in \Cref{subsec:relwork}, \cite{ZurAET23} model transaction fees as ``whale transactions'' instead of linear-in-time; 
\cite{grunspan2018profitability} approximate transaction fees using the average amount of time in each block; thus, they simply increase the size of the fixed block rewards. 
\textbf{Our first contribution is an analysis of selfish mining under the combined model of block and transaction fee rewards; this practical contribution helps paint a more accurate picture of the selfish mining in Bitcoin under a realistic aggregate reward function.} 

\Cref{subsec:numerical} contains numerical results and discussion (see \Cref{fig:interpolation}) for the basic combination of fees and block rewards, along with other aggregate reward functions. Critically, as demonstrated in \Cref{fig:threshold-alphas}, the protocol risk depends greatly on the reward model. For example, we show that the threshold at which an attack becomes profitable decreases by 22\% when considering the two rewards together instead of only block rewards. 
Additionally, our plots allow us to make qualitative observations about miner behavior under different reward schemes. For example, a miner considering both block rewards and transaction fees will mine more or less aggressively, respectively, than if they cared about either alone. We confirm these analytical results through simulations (\Cref{fig:sims}) and by directly comparing them to existing literature (\Cref{app:block-rews-only,app:lin-rews-only}).

To derive the aforementioned practical results, we develop a set of theoretical results that may be of independent interest.  
\textbf{We present (i) a model of the Nakamoto Consensus Game with general stochastic reward sources, (ii) a novel methodology to analytically solve for a selfish miner's rewards, and (iii) a natural set of reward function properties.}
\Cref{sec:prelims} describes the new structure we impose on the NCG and how general, time-accruing reward sources interact with difficulty adjustment, which we must explicitly account for. 
Further, unlike previous work,\footnote{With the sole exception of \cite{ZurAET23}, which studies a narrow set of random rewards – see \Cref{subsec:relwork} for discussion} our reward functions can be stochastic. Namely, we study a much more general class of \textit{static} rewards (\Cref{def:static}), which we define as functions that accrue randomly and independently only as a function of time since the parent block. 
Calculating attacker profits under these general reward sources requires the novel, path-counting technique presented in \Cref{sec:generalstatic}. The practical results (\Cref{sec:multiplerewards}) described in the previous paragraph follow as a corollary since the combination of block and transaction fee rewards is static. 

Lastly, we characterize a natural set of reward function properties motivated by existing blockchains (\Cref{subsec:properties} and \Cref{app:max-blocks-persistent}). We illustrate these properties through two extensive case studies. \Cref{sec:examples} examines transaction fees and describes how different assumptions about block size, transaction patience, and arrival rate manifest in very different reward functions captured by our properties. \Cref{app:lvr-examples} focuses on arbitrager profits under various assumptions about price trajectories and leader-election mechanisms.

\section{Preliminaries and model}\label{sec:prelims}

We start by defining a stylized model of Proof-of-Work mining with general stochastic rewards. This necessitates some crucial differences between our model and previous selfish mining literature. For example, general rewards can be sensitive to specific inter-block times, requiring explicit modeling of difficulty adjustment. \Cref{sec:notes-on-model} discusses these differences in detail.

\subsection{Nakamoto Consensus Game with general rewards}\label{subsec:NCG}

Let $M$ denote the set of $n$ miners, where miner $m\in M$ has hashrate $\alpha_m$. 

\paragraph*{Views.}
At any time $t$, there is a public \emph{view} $V_t$, consisting of the ``state'' of the blockchain known to all miners at time $t$. This view includes all blocks that have already been broadcast, their creation times, and the identity\footnote{Real-world blockchains are often pseudonymous, and the ``identities'' of miners refer to their public keys.} of their creators in $M$. It also includes the content of each block, which contains enough information to compute the values of all variables and account balances in every block across forks. For each block $B$ in a view, we have $\tstamp(B)$, the time\footnote{Timestamp here refers to the actual creation time of the block, rather than a reported time stated by the miner.} that the block was produced.

At any time $t$, there is also a private view $V_t^m$ for each miner $m$ that includes $V_t$ and potentially some additional blocks $m$ knows about that are unknown to all other miners (e.g., a private fork). We assume that miners don't selectively exclude a subset of miners when they broadcast, and all broadcasting happens instantaneously (e.g., no eclipse attacks \cite{heilman2015eclipse}). As a result, $V_t^m$ will only include $V_t$ and any blocks mined by $m$ that have not yet been broadcast (along with their contents).

\paragraph*{General Rewards.} Miners are rewarded for creating blocks on the eventual longest chain in the form of block rewards (a fixed value issued once per block), fees from included transactions, and potentially additional revenue stemming from their monopolistic control over the content of the block (MEV). The size of this reward can be different across blocks and might be stochastic. We abstractly model these rewards as a function $R$.

Fix a time $t$, a view $V$, a block $B$ in $V$, and a miner $m$. We use $r$ to capture any exogenous randomness that could impact the value of blocks that a miner creates (e.g., the launch of a protocol that could create large amounts of congestion and resultingly higher transaction fees as in \Cref{ex:babylon}). We denote by $\mathcal{B}^m(t,V,B,r)$ the set of \emph{valid} blocks that $m$ can create.
Because not all views are achievable under a specific realization of the randomness $r$, when we invoke a view $V$ together with $r$, we implicitly restrict $r$ such that $V$ is realizable.

\begin{definition}[Reward Function]
	A \emph{reward function} $R^m$ for miner $m$ takes as input a time $t$, a view $V$, a block $B$ in $V$, randomness $r$, as well as a block $B'\in\mathcal{B}^m(t,V,B,r)$, and outputs a real number,
	\begin{align*}
		R^m(t,V,B,r,B') \to \mathbb{R}.
	\end{align*}
\end{definition}

The output of $R^m$ can be interpreted as the amount of reward collected by $m$ for creating a block $B'$ that extends $B$ in $V$ at time $t$ given randomness $r$, assuming $B'$ ends up on the eventual longest chain.

We allow different miners to have different reward functions to keep the model general. This per-miner reward can capture miner heterogeneity (e.g., from private order flow or better trading strategies). For the properties we define in \Cref{subsec:properties} and the selfish mining analysis in \Cref{sec:generalstatic,sec:multiplerewards}, however, we restrict our study to \emph{miner-independent} (see \Cref{def:minerindependent} below) reward functions.

\paragraph*{Miner Strategies.} Each miner $m$ has a strategy that takes as input a time $t$, a view $V_t^m$, and the reward $R^m(t,V_t^m,B,r,B')$ for extending each block $B\in V_t^m$ by a valid block $B'\in\mathcal{B}^m(t,V_t^m,B,r)$, and outputs
\begin{itemize}
	\item a block $B\in V_{t}^m$ to mine on,
	\item contents of the next block $B'\in\mathcal{B}^m(t,V_t^m,B,r)$, and
	\item a (potentially empty) subset of blocks in $V^m_t\setminus V_t$ to broadcast.
\end{itemize}
For each miner $m$, we denote by $\nxt{m,t,V_t^m,r}$ the first time after (or equal to) $t$ that $m$ broadcasts a block assuming their private view remains $V_t^m$, and by 
\begin{align*}
	\nxtb{t,r}:=\underset{m\in M}{\arg\min}\{ \nxt{m,t,V_t^m,r} \},
\end{align*}
the identity of the next miner to broadcast after (or at $t$), breaking ties arbitrarily. We use these functions to determine the ordering of broadcasters as the game progresses (see details in Appendix – \Cref{alg:ncg-updates}).

Note that miner strategies cannot directly observe the randomness $r$ but might indirectly depend on it through the realizations of $R^m$ and $\mathcal{B}^m(t,V_t^m,B,r)$, all of which take as input the same randomness $r$. While we focus on deterministic miner strategies in this paper, our model can easily be extended to account for randomized behavior.

\paragraph*{Nakamoto Consensus Game (NCG)}

The Nakamoto Consensus Game describes how views evolve given a fixed set of miner strategies. We model the game after difficulty has already been adjusted according to these strategies, resulting in a stable orphan rate $\lambda$,\footnote{See \Cref{sec:notes-on-model} for extended discussion and \cite{bitcoin-book} for a more comprehensive overview of difficulty adjustment is used in the Bitcoin protocol.} and we normalize time so that the average block time is 1. We let time $0$ refer to a point after which the difficulty of mining puzzles remains constant. We further assume that miners only extend blocks created after time $0$.

Prior to the game, we draw the following random variables independently:\footnote{See \Cref{app:further-model} for a discussion of why we can assume independence.}
\begin{itemize}
	\item \textit{Miner selection} – A sequence of miners $\vec{m}\in M^\mathbb{N}$, where $m_i$ is the creator of the $i^{th}$ block. For each $i$, $m_i$ is selected independently such that it equals $m\in M$ with probability $\alpha_m/\sum_{j=1}^n\alpha_j$.
	\item \textit{Block times} – A sequence of block creation times $\vec{t}\in\mathbb{R}^\mathbb{N}$, where $t_0:=0$, and the duration $t_{j}-t_{j-1}$ for $j\geq 1$ is drawn i.i.d. from an exponential distribution with rate $1/(1-\lambda)$.
	\item \textit{Remaining randomness} – The randomness $r$.
\end{itemize}
Initially, there is some public view $V_0$ but no hidden blocks, so $V^m_0=V_0$ for all $m\in M$, where $V_0:=\{B_0\}$ is the view containing a single genesis block $B_0$ such that $\tstamp(B_0)=0$.
Starting with $j=1$ (the variable used to index the miners $\vec{m}$ and block times $\vec{t}$) and $t=0$, we check if there are new blocks to broadcast before updating the block that each miner is building on based on the contents of the pre-determined strategy. See \Cref{app:algo} for the  procedure to carry out the NCG and for a note on ensuring uniqueness of the longest chain.

\subsection{Notes on model}\label{sec:notes-on-model}

We briefly summarize how we model difficulty adjustment below. 
See \Cref{app:further-model} for an extended comparison of our model to previous work and the role of independence in the randomness of rewards.

\paragraph*{Difficulty adjustment.}
In practice, mining involves solving computational puzzles with adjustable difficulty. Since miners can enter (or exit) permissionlessly, the total hashrate of all miners can vary over time, resulting in varying block production rates. The protocol varies the difficulty of these puzzles based on timestamps of recent blocks, targeting a fixed average inter-block time. In Bitcoin, the difficulty updates once every difficulty \emph{epoch} (2016 blocks/roughly every two weeks assuming ten-minute block times) by the \emph{difficulty adjustment algorithm (DAA)}. The difficulty of extending any blocks is the same within an epoch, except for forks across the epoch boundary. Note also that forks are rarely longer than a few blocks, so this represents an insignificant fraction of the blocks in an epoch.

Fixing a set of miner strategies, one can compute the expected fraction of blocks per epoch that do not end up on the longest chain. We assume the difficulty adjusts based on this expected value (rather than directly modeling per-epoch updates described above) and calculate the profitability of various strategies under this new difficulty. Specifically, we calculate the expected orphan rate $\lambda$ (\Cref{lem:lambda}), which implies the difficulty-adjusted rate of block production is $1/(1-\lambda)$. This corresponds to blocks on the longest chain growing at an average rate of 1. 

\section{Reward functions: properties and examples}\label{subsec:properties}

Recall that miner strategies take as input the amount of reward available for extending each existing block at time $t$, as specified by the reward function $R$, and make decisions about where to mine, what to include, and what to broadcast accordingly. This section defines a set of natural properties that reward functions might have. 
In \Cref{sec:examples}, we motivate these properties with an extensive case study on transaction fees, one of the primary revenue sources observed empirically to date. See \Cref{app:lvr-emphermeral,app:max-blocks-persistent,app:persistent-reward-examples,app:lvr-examples} for additional properties and a second case study on LVR, a prominent source of revenue on chains with significant DEX volume.

While we define these properties in the context of the NCG in this paper, we believe their applicability extends far beyond Proof-of-Work and selfish mining. Our framework can be used to characterize rewards and their implications for the incentives of consensus participants across blockchain protocols.

Recall that in the NCG, given a set of miner strategies, three independent random variables $\vec{t},\vec{m},r$ are drawn and used to compute a set of views $V_t^m$ for all miners $m$ and all times $t$.
Let $\mathcal{V}_t^m$ be the support $V_t^m$, meaning the set of views achievable at time $t$ for \emph{some} realization of $\vec{t},\vec{m},r$. Initially, $\mathcal{V}_0^m=\{V_0\}$ for all $m$, where $V_0:=\{B_0\}$ is the view containing a single genesis block $B_0$ such that $\tstamp(B_0)=0$. Miner strategies in the NCG take the realization of a reward function as input. That is, at time $t$, miner $m$ sees the reward $R^m(t,V_t^m,B,r,B')$ for extending each block $B\in V_t^m$ by a valid block $B'\in\mathcal{B}^m(t,V_t^m,B,r)$.

A miner-independent reward function yields the same value for the block regardless of who created it. This 
corresponds to a setting where all miners have access to the same set of rewards (e.g., the common value setting), and thus, we drop the superscript $m$. In practice, some reward sources may be heterogeneous between block producers (e.g., from private order flow or from differing abilities to extract MEV \cite{bahrani2024centralization}). All reward functions considered in this paper will be miner-independent, but the properties can be readily generalized by tracking the subset of miners with access to each reward source. See \Cref{sec:conclusion} for a discussion of extending this work.

\begin{definition}[Miner-Independent Rewards] \label{def:minerindependent}
	A reward function $R$ is \emph{miner-independent} if for all times $t$, all miners have the same set of valid views, the same set of valid blocks extending each block in those views, and equal rewards from any such valid block.\footnote{\label{foot:bijection}Technically, since blocks include information about their creator, it would be more accurate to say that there is a bijection between the set of valid views/blocks for any pair of miners. We overlook this formality to simplify notation.} Formally, $R$ is miner-independent if for all $t$, and all $m,m'\in M$, 
	\begin{itemize}
		\item $\mathcal{V}_t^m=\mathcal{V}_t^{m'}$,
		\item for all $V\in\mathcal{V}^m_t$, all blocks $B$ in $V$, and all $r$, we have $\mathcal{B}^m(t,V,B,r)= \mathcal{B}^{m'}(t,V,B,r)$,
		\item for all $V\in \mathcal{V}^m_t$, all $r$, all parent blocks $B$ in $V$, and all valid blocks $B' \in \mathcal{B}^m(t,V,B,r)$, we have $R^m(t,V,B,r,B')= R^{m'}(t,V,B,r,B')$.
	\end{itemize}
\end{definition}

We can also characterize reward functions that grow according to the same distribution without depending on the chain's history. The following property limits the dependence of $R$ on the view. Intuitively, it says that the only relevant information in the view that affects the amount of reward in a block is the \emph{timestamp of its parent}.

\begin{definition}[View-Independent Rewards]\label{def:viewindependent}
	A reward function $R$ is \emph{view-independent} if for all times $t'<t$, any two views $V_1,V_2\in\mathcal{V}_{t'}$ such that $\tstamp(B_1)=\tstamp(B_2)=t'$ for some blocks $B_1\in V_1,B_2\in V_2$, we have:
	\begin{itemize}
		\item for all $r$, the set of valid blocks extending $B_1$ at $t$ in $V_1$ is the same as the set of valid blocks extending $B_2$ at $t$ in $V_2$, $\mathcal{B}(t,V_1,B_1,r)=\mathcal{B}(t,V_2,B_2,r)$,\footnote{Recall that when we invoke a view and randomness together as inputs to a function, we implicitly assume that the randomness could give rise to the view.}
		and
		\item for every valid block $B'\in\mathcal{B}(t,V_1,B_1,r)$, we have
		\[
		\Pr_{r,\vec{t},\vec{m}\vert V_1}[R(t,V_1,B_1, r, B')=x]=\Pr_{r,\vec{t},\vec{m}\vert V_2}[R(t,V_2,B_2, r, B')=x]
		\]
		for all $x$.
	\end{itemize}
\end{definition}

Note that fixing a view $V_1$ (resp. $V_2$) can update the distribution of the $r,\vec{t},\vec{m}$. We use the subscript $r,\vec{t},\vec{m}\vert V_i$ to refer to the posterior distribution of these random variables conditioned on $V_1,V_2$. For example, block rewards are view-independent (within the same four year halving window) because each block earns the same fixed reward from the protocol.
\Cref{rem:patient-not-vi} below is a non-example that demonstrates how transaction fees that are not fully claimed by a parent block (e.g., arising from finite block sizes) are not view-independent because the reward of the resulting child block depends on the amount of unclaimed transaction fees.  

View-independence already limits the dependence of $R$ on the view to the timestamp of the parent block. We next define a subset of view-independent rewards where the dependence on view is limited to the length of \emph{elapsed time since the parent block} (and is the same regardless of the exact parent block timestamp).

\begin{definition}[Static Rewards]\label{def:static}
	A reward function $R$ is \emph{static} if for all $\Delta>0$, all times $t_1,t_2$ and views $V_1\in\mathcal{V}_{t_1}$ and $V_2\in\mathcal{V}_{t_2}$ such that $\tstamp(B_1)=t_1-\Delta$ and $\tstamp(B_2)=t_2-\Delta$, we have:
	\begin{itemize}
		\item for all $r$, the set of valid blocks extending $B_1$ at $t_1$ in $V_{1}$ is the same as the set of valid blocks extending $B_2$ at $t_2$ in $V_{2}$, $\mathcal{B}(t_1,V_{1},B_1,r)=\mathcal{B}(t_2,V_{2},B_2,r)$, and
		\item for all valid blocks $B'\in\mathcal{B}(t_1,V_1,B_1,r)$, we have
		\[
		\Pr_{r,\vec{t},\vec{m}\vert V_1}[R(t_1,V_1,B_1, r, B')=x]=\Pr_{r,\vec{t},\vec{m}\vert V_2}[R(t_2,V_2,B_2,r,B')=x]
		\]
		for all $x$.
	\end{itemize}
\end{definition}

\Cref{rem:patientisstatic} below highlights that transaction fees are static using the \cite{carlsten2016instability} model with constant arrival rate and infinite block sizes. See appendix \Cref{rem:non-local-lvr,rem:local-lvr-static}, which demonstrate the conditions under which LVR is or is not static.

\subsection{Properties of transaction fees}\label{sec:examples}
To illustrate the value of the aforementioned properties of reward functions, we perform an extensive case study on transaction fees (see \Cref{app:lvr-examples} for a similar study but on LVR). We consider the relevant properties that arise from different assumptions about block sizes, user patience levels, and accrual rate of transactions. These examples aim to justify the properties we focus on in \Cref{subsec:properties} and motivate \Cref{sec:generalstatic,sec:multiplerewards}, which measure attacker revenue under multiple static reward sources. 

\paragraph*{Transaction fees.}
Users pay transaction fees to interact with blockchains. A mempool collects transactions as they arrive, and its state at all times is captured in our model through the realization of the randomness $r$.
Consider transactions as infinitely divisible,\footnote{\label{fn:heterogeneity}We could instead consider transactions as heterogeneous in size (e.g., as in Ethereum where transactions consume different amounts of gas) or exclusive to miners (e.g., from private order flow), but the additional complexity doesn't add anything to the qualitative observations and is thus elided.} belonging to the same mempool,$^\text{\ref{fn:heterogeneity}}$ and specifying a fee.
A valid block $B'$ mined at time $t$ and extending a parent block $B$ can include any transactions in the mempool at $t$ that are not already included in $\chain(B)$.
The corresponding reward function for a valid candidate block is the sum of the fees paid by the transactions it includes.

We call users \emph{patient} if their transactions remain valid until they are eventually included in a later block. We shorthand transactions originating from patient users as \textit{patient transactions}.

As demonstrated in the following example, we cannot claim any further structure on the patient-user transaction fee reward function without restricting the set of valid blocks. 

\begin{example}[Patient transaction fees may be view-dependent]\label{rem:patient-not-vi}
	Consider two blocks $B_1,B_2$ with the same timestamp $t'$ and with the same parent mined at $t$. $B_1$ claims all transaction fees arriving in $[t,t']$, while $B_2$ claims none. The rewards of maximizing candidate blocks $B_1',B_2'$ built on $B_1,B_2$ respectively, are different, as $B_2'$ can claim more transaction fees than $B_1'$.
\end{example}

The key observation is that miners may not claim the complete set of available transactions, thus impacting the claimable rewards of descendant blocks in that view (for more formalism, see \Cref{lem:all-claim} in the appendix). Alternatively, consider the case where each block can include all transactions (e.g., infinite block size as in \cite{carlsten2016instability}). If we additionally restrict the set of views for each miner $\mathcal{V}^m_{t'}$, we \textit{can} make the following stronger claim.

\begin{example}[Patient transaction fees are view-independent if blocks are infinite capacity and fully-claiming]\label{rem:fullyclaimingviewindep}
	Assume blocks have infinite capacity and restrict views to only include blocks that contain all available transaction fees at the time of mining.
	Then, the distribution of rewards for $B'$ built at time $t$ on parent block $B_1$ or $B_2$, which have the same timestamp $t'$, is the same. Namely, the reward is the sum of patient transaction fees arriving in the interval $[t',t]$.
\end{example}

Here, view-independence arises from the mempool fully emptying after each block is created. Thus, the reward function only depends on newly arriving transaction fees after the parent block is mined. Importantly, this reward function \textit{may not be static} (which is a stronger condition than view independence) because the transaction fee arrival rate may not be homogeneous over time. For example, some hours of the day (such as trading hours in Asia time zones) might result in higher transaction fee arrivals. Assuming a constant transaction arrival rate, we can further establish staticness.

\begin{example}[\cite{carlsten2016instability}'s model of transaction fees is static]\label{rem:patientisstatic}
	Assume 1 unit of patient transaction fees arrive per unit of time, blocks have infinite capacity, and all blocks in the view claim all available transaction fees (as in \cite{carlsten2016instability}). 
	A block $B'$ extending $B$ at time $\tstamp(B)+\Delta$ can claim any reward in $[0,\Delta]$. Therefore, this reward function is static.
\end{example}

While the previous example considers \emph{deterministic} transaction fee arrivals (1 unit of fees per unit of time), the same claim holds if the arrival rate is a random function of $r$ (but still identically distributed over time). Constant accrual, in addition to the mempool clearing, results in the reward function being independent of the timestamp of the parent block, making it static. 

Until now, we have only considered patient users. In contrast, consider \textit{impatient users}, who submit transactions that are only valid for the next block produced (e.g., by checking the height of the block they are included in before executing). We similarly shorthand these as \textit{impatient transactions}. 

\begin{example}[Identically distributed, impatient transaction fees are static]\label{rem:impatient}
	Assuming the impatient transactions arrive according to a fixed distribution over time since the parent block, this reward function is static because the mempool clears after each block.
\end{example}

Note that the mempool clearing after each block was necessary for both \Cref{rem:patientisstatic,rem:impatient} to be static. However, the clearing came about differently -- infinite block sizes in the former and impatient users in the latter. The mempool clearing is a \textit{sufficient} condition for staticness if the distribution of rewards doesn't depend on global clock time. 

Varying the assumptions on block size and user patience allows us to describe reward functions under differing models of congestion; we now consider transaction fees that are high regardless of the block size. This \textit{contentious transaction} model is motivated by the launch of Babylon (\Cref{ex:babylon}). Transaction fees may spike because there is immense demand not just for inclusion in a block but also for a specific ordering (e.g., needing to be one of the first 100 transactions of a particular type). 

\begin{example}[Bernoulli rewards are static]
	Consider contentious transaction fees modeled as independent Bernoulli trials that occur once per block height, resulting in a constant random reward of size $E$ with probability $p$. This is a static reward function.
\end{example}

In \Cref{sec:multiplerewards}, we study a variant of selfish mining under a combined reward function that includes Bernoulli rewards, linear-in-time transaction fees as in \Cref{rem:patientisstatic}, and block rewards. This combined reward function is static, which is crucial to the tractability of that analysis. See \Cref{subsec:relwork} for a discussion on the similarities between our model of Bernoulli rewards and that of \cite{ZurAET23}.
See \Cref{app:persistent-reward-examples} for examples pertaining to the properties defined in \Cref{app:max-blocks-persistent} and \Cref{app:lvr-examples} for an extended case study on LVR.

These examples showcase the properties we ascribe to general reward functions in \Cref{subsec:properties}. While these case studies allow us to demonstrate View-Independence (\Cref{def:viewindependent}) and  Staticness (\Cref{def:static}) in familiar settings, they do not cover all MEV types. As mentioned in \Cref{sec:conclusion}, we see characterizing the complete set of properties and applying them to other forms of MEV (e.g., sandwiches and liquidations) as a key direction for future work. With these properties in place, we now focus on calculating expected attacker profits from performing $\beta$-cutoff selfish mining strategies under general static reward functions. 

\section{Selfish mining with static rewards}\label{sec:generalstatic}

\Cref{sec:prelims,subsec:properties} presented our model of general stochastic rewards and created a structure around these reward functions. The subsequent sections study a specific set of miner strategies to analyze their profitability and feasibility under general static rewards (\Cref{def:static}). We examine $\beta-$cutoff selfish mining strategies \cite{carlsten2016instability}, in which the attacker determines whether or not to hide their blocks based on the amount of reward realized during the mining process.

\subsection{Mining strategies in the NCG}\label{subsec:strategies}

In the NCG defined in \Cref{sec:prelims}, miners make three decisions at each time $t$: 
\begin{enumerate}
	\item which block to extend,
	\item the contents of their next mined block, and
	\item which blocks to broadcast.
\end{enumerate}
Based on these decisions, we define the protocol-prescribed mining as honest.
\begin{definition}[Honest mining]
	The honest mining strategy is defined as,
	\begin{enumerate}
		\item mine on the longest chain,
		\item claim all available rewards, and
		\item publish every block immediately.
	\end{enumerate}
\end{definition}
In words, the honest miners always follow the longest chain and immediately share any block they find with the rest of the network. If the remainder of the network is honest, the rewards that an honest miner, $i$, controlling $\alpha_i$ fraction of the hash power is proportional to their mining power.


\cite{eyal2013majority} and \cite{carlsten2016instability} demonstrate that selfish mining is profitable for miners (even under various tie-breaking schemes) when considering \textit{only} block rewards or \textit{only} transaction fees that are linear-in-time respectively. \cite{carlsten2016instability} also introduced $\beta$-cutoff selfish mining strategies, in which the attacker mines selfishly as long as the rewards they earn on their hidden block are sufficiently small. If their rewards are larger than a threshold $\beta$, they instead broadcast immediately to avoid losing the valuable block. 
\begin{definition}[$\beta$-cutoff selfish mining \cite{carlsten2016instability}]\label{def:betacutoff}
	If there is no private chain, the attacker follows the rules:
	\begin{enumerate}
		\item mine on the public longest chain,
		\item claim all available rewards, and
		\item withhold any block found where the time since parent is less than $\beta$ (create a private chain).
	\end{enumerate}
	The third step above creates the private chain for the attacker; they transition into the following rules (same as original selfish mining):
	\begin{enumerate}
		\item mine on the private chain,
		\item claim all available rewards, and
		\item withhold any block found unless an honest block is found and the difference in length between the public chain and the private chain is $\leq 1$.
	\end{enumerate}    
\end{definition}

This strategy differs from pure selfish mining only in Step 3 under no private chain, where the attacker decides whether or not to publish based on the rewards captured in the block. Note that the strategies we consider claim all available rewards; miners could instead choose to intentionally leave  some rewards on the table to incentivize subsequent miners to build on their chain (``undercutting'' \cite{carlsten2016instability}). See \Cref{sec:conclusion} for discussion on extending our framework to a broader class of miner strategies.

Given a static reward function, we want to determine the per-unit-time expected attacker rewards from following the $\beta$-cutoff strategy as in \Cref{def:betacutoff}. We develop a new technique based on a Markov Chain similar to Figure~13 in \cite{carlsten2016instability} and Figure~1 in \cite{eyal2013majority}. 

\begin{definition}[$\beta$-cutoff Markov Chain]\label{def:markovchain}
	Consider the NCG where the $1-\alpha$ of the mining power follows the honest strategy and $\alpha$ follows the $\beta$-cutoff strategy. Then define \texttt{State i} for $i \geq 1$ where the attacker has a hidden chain $i$ blocks longer than the public chain. Let \texttt{State 0} denote the attacker having no hidden blocks and \texttt{State 0'} denote the race state between the honest and attacker forks each of length 1. Let \texttt{State 0''} denote the state immediately after the attacker publishes their private chain. 
\end{definition}

\begin{figure}
	\centering
	\includegraphics[width=\linewidth]{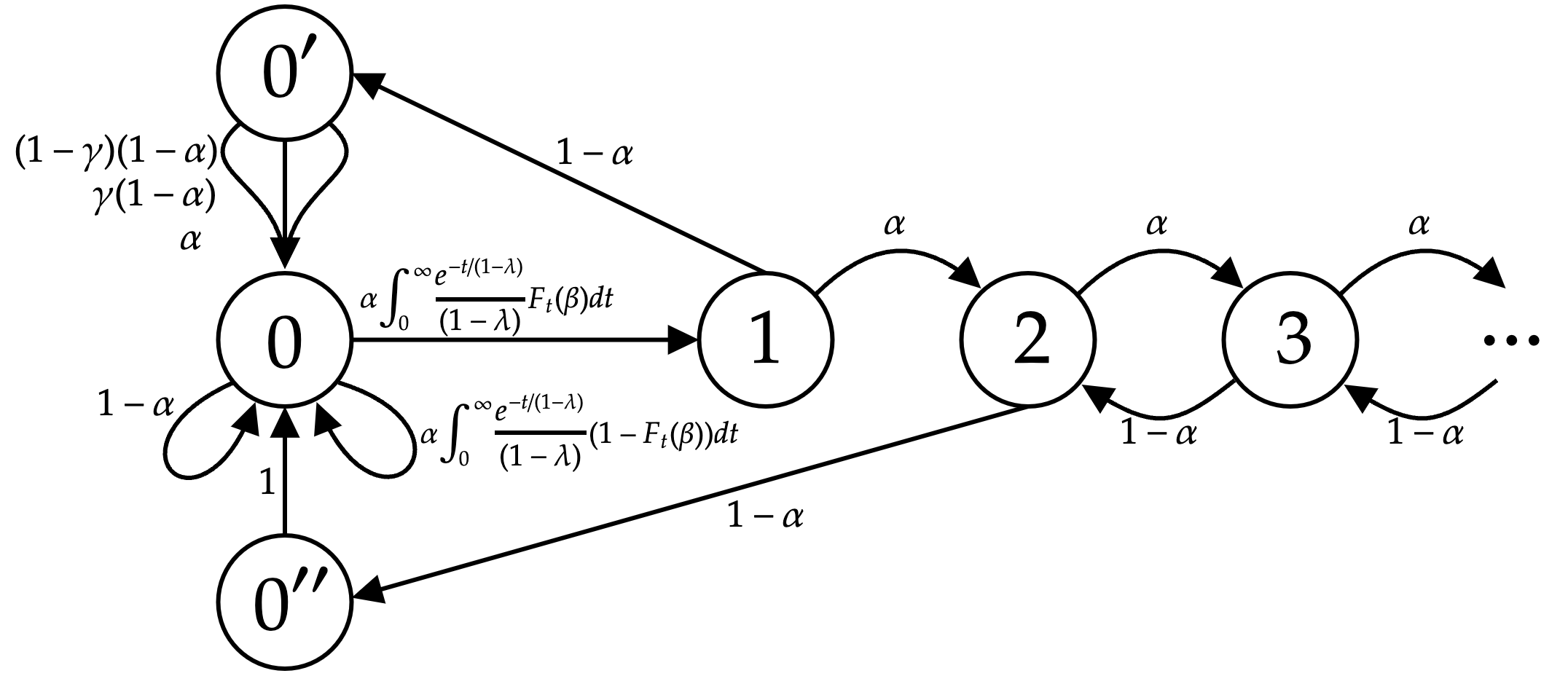}
	\caption{The Markov Chain capturing the $\beta-$cutoff strategy for miners deciding whether to publish blocks depending on the size of the static reward. $F_t(\beta)$ is the CDF of the rewards given time $t$ since the parent block, $\Pr[R(t) \leq \beta]$. The rate of the chain is $1/(1-\lambda)$, which explicitly captures the difficulty adjustment that results from a specific $\beta$-cutoff strategy.}
	\label{fig:markovgeneral}
\end{figure}

\Cref{fig:markovgeneral} depicts this Markov Chain. We now derive the transition probabilities using a general, static reward function. When considering static reward sources, notice that $R$ is only a function of the time since the parent block was mined; we hereafter denote this static reward source as $R(t)$, where $t$ is the time since the parent block. This simplification allows us to compute the probability of transitioning from \texttt{State 0} $\rightarrow$ \texttt{State 1} by comparing the expected amount of rewards earned in \texttt{State 0} conditioned on those rewards being less than $\beta$ (the cutoff threshold for publishing the block in \statezeronosp).

\begin{definition}[Static Reward CDF \& PDF]\label{def:cdf}
	For a static reward source $R$ and randomness $r$, let $F_t(x)$ denote the CDF of the reward function indexed by time $t$,
	\begin{align*}
		F_t(x) = {\Pr}_r[R(t) \leq x].
	\end{align*}
	Similarly, let $F'_t(x)$ denote the PDF of the reward function,
	\begin{align*}
		F'_t(x) = {\Pr}_r[R(t) = x].
	\end{align*}
\end{definition}
To calculate the probability of withholding the block, we integrate the probability distribution of the time until the next block multiplied by the CDF of the rewards at each time.
\begin{align}\label{eq:zerotoone}
	\Pr[\texttt{State 0} \rightarrow \texttt{State 1}] &= \alpha \int_0^\infty 
	\underbrace{\frac{e^{-t/(1-\lambda)}}{(1-\lambda)}}_{\shortstack{\scriptsize density of time}}
	\cdot \underbrace{F_t(\beta)}_{\shortstack{\scriptsize rewards $ <\beta$\\ \scriptsize by time $t$}} dt 
\end{align}
Intuitively, given a reward source $R$, this value tells us how likely it is that the rewards within an attacker block are less than $\beta$. Notice that the density function of the exponential depends on a rate parameter $1/(1-\lambda)$ (as discussed in \Cref{subsec:NCG}), where $\lambda$ the explicitly calculated orphan block rate calculated as a function of $\beta$ to account for difficulty adjustment. See \Cref{lem:lambda} for its derivation. 
Conversely, given an attacker block we can also calculate the probability that the attacker publishes the block immediately if the block rewards are be greater than $\beta$,
\begin{align}\label{eq:zerotozero}
	\Pr[\texttt{State 0} \rightarrow \texttt{State 0} &\land \text{attacker block}]  = \alpha \int_0^\infty 
	\underbrace{\frac{e^{-t/(1-\lambda)}}{(1-\lambda)}}_{\shortstack{\scriptsize density of time}} 
	\cdot \underbrace{(1-F_t(\beta))}_{\shortstack{\scriptsize rewards $ \geq \beta$\\ \scriptsize by time $t$}} dt 
\end{align}
With \Cref{eq:zerotoone,eq:zerotozero}, we construct the entire Markov chain in \Cref{fig:markovgeneral}. Note that it differs from Figure~1 in \cite{eyal2013majority} and Figure~13 in \cite{carlsten2016instability}, only in the transition probabilities from \texttt{State 0} calculated above for general static reward sources (\Cref{eq:zerotoone,eq:zerotozero}). 
As in previous work, $\gamma$ is the tie-breaking rate dictating the fraction of honest miners who mine on the attacker block after it is published, and there is a race of length-1 forks (in \texttt{State 1}). This parameter doesn't impact the $\beta$-cutoff itself and only affects the probability that the attacker fork wins the tie. See \Cref{app:stationary} for the calculation of the stationary distribution of this Markov Chain.

With the stationary distribution, we can explicitly solve for the proportion of orphan blocks, $\lambda \in [0,1]$, which in turn gives us the difficulty-adjusted rate of the Poisson process of the transitions in the Markov Chain as $1/(1-\lambda)$. This rate is \textit{faster} than the rate of canonical blocks (normalized to 1) because the orphaning process causes a reduction in difficulty.  

\begin{lemma}[Calculating $\lambda$]\label{lem:lambda}
	Let $\lambda$ measure the probability that a block produced in the Markov Chain is orphaned. Then,
	\begin{align*}
		\lambda = p_1 (1-\alpha) \left(1+\frac{\alpha}{1-2\alpha}\right).
	\end{align*}
\end{lemma}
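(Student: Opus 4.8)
The plan is to read $\lambda$ as the long-run fraction of all mined blocks that fail to land on the canonical chain, and to compute it by summing the expected number of orphans produced on each transition against the stationary distribution of the chain in \Cref{fig:markovgeneral}. The key structural fact I would establish first is that every transition corresponds to exactly one newly mined block: in \stateone{} and the higher-lead states an attacker or honest block is found, in the race state the resolving block is freshly mined, and in \statezero{} the mined block is published immediately. Consequently, if $o_s$ denotes the expected number of blocks orphaned on a transition out of state $s$ and $p_s$ its stationary probability, then $\lambda = \sum_s p_s\, o_s$ with no extra normalization.

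Next I would pin down exactly which transitions destroy a block. There are only two sources of orphans. First, whenever the race state (lead $1$, with the attacker's block published against a competing honest block) resolves, precisely one of the two competing length-one forks is abandoned, so $o_{0'}=1$ regardless of who wins the tie. Second, whenever an honest block is found while the attacker's private lead is at least two (states $i\ge 2$), that block is overtaken by the eventually published longer private chain and is orphaned; this happens with probability $1-\alpha$ out of each such state, so $o_i = 1-\alpha$. Every other transition creates no orphan — in particular the transition out of \stateone{} that opens the race merely defers its orphan to the race resolution and must not be double-counted. This yields
\[
	\lambda \;=\; p_{0'} \;+\; (1-\alpha)\sum_{i\ge 2} p_i .
\]

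Finally I would substitute the stationary probabilities from \Cref{app:stationary}. Writing $p_1$ for the stationary probability of \stateone, the balance equations give $p_{0'}=(1-\alpha)p_1$ and a geometric tail $p_i = p_1\,\rho^{\,i-1}$ for $i\ge 1$, with ratio $\rho=\alpha/(1-\alpha)<1$ (the summable root of $(1-\alpha)\rho^2-\rho+\alpha=0$). Hence $\sum_{i\ge 2}p_i = p_1\,\rho/(1-\rho) = p_1\,\alpha/(1-2\alpha)$, and plugging in gives $\lambda = (1-\alpha)p_1 + (1-\alpha)p_1\,\alpha/(1-2\alpha) = p_1(1-\alpha)\bigl(1+\tfrac{\alpha}{1-2\alpha}\bigr)$, as claimed.

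The main obstacle I anticipate is the orphan bookkeeping rather than the algebra: one must verify that each race contributes exactly one orphan, that each honest block mined during a lead-$\ge 2$ excursion is indeed orphaned (exactly one per honest-mining step, including the collapse from lead two straight back to \statezero), and that the orphan triggered by a race is attributed only to the race resolution and not also to the step that began it. Getting this correspondence right — together with the observation that each transition mints exactly one block, so that $\sum_s p_s o_s$ is already the desired probability and carries no hidden factor of $p_0$ — is what makes the clean closed form drop out.
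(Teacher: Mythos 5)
Your proposal is correct and follows essentially the same route as the paper: orphans are attributed to entries into the race state (one per race, $p_{0'}=(1-\alpha)p_1$) and to honest blocks found while the lead is at least two (probability $1-\alpha$ from each \texttt{State i}, $i\geq 2$), giving $\lambda = p_{0'}+(1-\alpha)\sum_{i\geq 2}p_i$, after which the geometric tail $\sum_{i\geq 2}p_i=p_1\alpha/(1-2\alpha)$ yields the stated formula. Your extra bookkeeping (one block minted per transition, no double counting of the race orphan) is a careful elaboration of what the paper leaves implicit, not a different argument.
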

Proof in \Cref{pr:lambda}.
With $\lambda$, the new block production rate is $1/(1-\lambda)$. This is the rate at which blocks are found by any miner (i.e., the rate of transitioning between states in the Markov Chain; \Cref{fig:markovgeneral}) assuming a constant hash rate and results in the canonical chain blocks being produced at a rate of $1$.

\subsection{Expected attacker rewards}\label{subsec:perstateattacker}
The stationary distribution alone is incomplete. To determine the attacker profit for a given cutoff strategy, we calculate their expected profit from each state and multiply those values by the stationary distribution of the Markov Chain to determine the expected profit per unit of time.
\begin{definition}[Per-state attacker rewards, $f_i$]\label{def:fis}
	Let $f_i$ denote the expected reward of a canonicalized attacker block mined in \texttt{State i}.     
\end{definition}
To calculate this value, we need to find the expected value of the reward function by integrating the time distribution over the possible paths that include an attacker block claiming rewards arriving during \texttt{State i}. 
We first enumerate all possible paths that result in a canonical attacker block from \stateinosp; we then integrate the reward function over each path. The following example demonstrates this technique, and we generalize it in \Cref{thm:attackgeq2}.

\begin{example}[\texttt{State 3} paths]\label{ex:state3paths}
	Consider the rewards arriving after the attacker has a lead of length three. These rewards can be canonicalized in four different ways:
	\begin{enumerate}
		\item the attacker finds the next block, extending their lead to four,
		\item the honest parties find the next block, then the attacker finds the subsequent,
		\item the honest parties find the next two blocks, causing the attacker to publish their hidden chain, and then the attacker finds the first block after publishing,
		\item the honest parties find the next two blocks, causing the attacker to publish their hidden chain, and then the honest parties find the first block after that.
	\end{enumerate}
\end{example}
We can succinctly represent these four outcomes using the strings, \texttt{A, HA, HHA, HHH}, where \texttt{H} \& \texttt{A} denote honest and attacker blocks, respectively.
This example prompts the definition of attacker paths.
\begin{definition}[Attacker paths]\label{def:attack-paths}
	Given \texttt{State i} for all $i \geq 2$, there are $i$ distinct paths resulting in the attacker capturing rewards accrued in that state. The paths are enumerated as the string \texttt{(H$^*$)A}, where \texttt{H} \& \texttt{A} denote honest and attacker blocks respectively and \texttt{H} is repeated $0,1,\ldots i-1$ times.
\end{definition}

Continuing our \texttt{State 3} example, we now calculate the expected reward from each attacker path; adding these together is precisely the value of interest, $f_3$.

\begin{example}[$f_3$ continued]
	Consider the three attacker paths of \texttt{State 3}: \texttt{A, HA, HHA}. These paths have lengths 1,2,3 and occur with probabilities $\alpha, (1-\alpha)\alpha, (1-\alpha)^2\alpha$, respectively. Thus, we calculate the expected reward as,
	\begin{small}
		\begin{align*}
			f_3 =& 
			\underbrace{\alpha\int_{0}^\infty \frac{e^{-t/(1-\lambda)}}{(1-\lambda)} \mathbb{E}_{r}[R(t)]\,dt}_{\texttt{A}} +
			\underbrace{(1-\alpha)\alpha \int_0^\infty \frac{t e^{-t/(1-\lambda)}}{(1-\lambda)^2} \mathbb{E}_{r}[R(t)]\,dt}_{\texttt{HA}} \\
			&\quad + 
			\underbrace{(1-\alpha)^2 \alpha \int_0^\infty \frac{t^2 e^{-t/(1-\lambda)}}{2(1-\lambda)^3} \mathbb{E}_{r}[R(t)]\,dt}_{\texttt{HHA}}
		\end{align*}
	\end{small}
\end{example}
Each of these expressions can be viewed as the product of three independent sources of randomness. The coefficients of the integrals are the probabilities of each path determined by the winning miner, which depends on $\vec{m}$. The first expression in the integrand is the PDF of the Erlang Distribution, which measures the sum of i.i.d. exponential random variables (all with rate $1/(1-\lambda)$) to determine the amount of time of the path, which depends on $\vec{t}$. The second expression in the integrand is the expected value over all remaining randomness, $r$, of the reward function at time $t$. See \Cref{app:general-f0-f1} for the remaining $f_i$ calculations. Combining the stationary distribution values, $p_i$, with the per-state expected rewards, $f_i$, we can calculate the full expected attacker reward. 
\begin{theorem}\label{def:fullreward}
	The attacker's expected reward is,
	\begin{align*}
		\text{ATTACKER REWARD} =f_0p_0 + f_1 p_1 + \alpha \sum_{i=2}^\infty f_i p_{i-1}.
	\end{align*}
\end{theorem}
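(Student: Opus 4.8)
The plan is to compute the long-run expected attacker reward per unit time via the law of total expectation, combined with the ergodic (renewal--reward) theorem for the $\beta$-cutoff Markov Chain (\Cref{def:markovchain}), by attributing the reward of each canonicalized attacker block to the state in which that block \emph{begins accruing} its reward. Since \Cref{def:fis} already packages, for each \statei{}, the full expected reward $f_i$ of such a block --- summing over the $i$ canonicalization paths \texttt{(H$^*$)A} of \Cref{def:attack-paths} and integrating the time-accrual against the appropriate Erlang densities --- what remains is to weight each $f_i$ by the long-run frequency with which a fresh reward-accruing attacker block is generated in the corresponding state, and then to sum. Because the chain is positive recurrent for $\alpha<1/2$ and the stationary probabilities $p_i$ decay geometrically (the birth--death part satisfies $p_{i+1}=\tfrac{\alpha}{1-\alpha}p_i$), the resulting series converges absolutely, so I can freely interchange the infinite sum with the expectation.

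The first and central step is to identify the reward-generating event for each state together with its stationary frequency. For the bulk states $i\ge 2$ this is the crux: a new attacker block starts accruing reward on the private tip \emph{exactly} when the attacker mines a block that raises the lead from $i-1$ to $i$, i.e.\ on the up-transition $i-1\to i$, which occurs with probability $\alpha$ conditioned on occupying the lead-$(i{-}1)$ state. Crucially, reaching lead $i$ \emph{from above} --- an honest block taking the chain from $i+1$ down to $i$ --- leaves the private tip, and hence the block currently accruing reward, unchanged, and therefore generates no new reward opportunity. Consequently the long-run frequency of birthing a fresh lead-$i$ attacker block is $\alpha p_{i-1}$, not $p_i$, which is precisely what produces the index shift and the explicit factor $\alpha$ in the third term $\alpha\sum_{i\ge2} f_i p_{i-1}$. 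I expect this non-double-counting argument --- that only attacker up-transitions, and never honest down-transitions, create fresh accruing blocks, so each canonicalized attacker block is attributed to exactly one state --- to be the main obstacle and the key to getting the weighting right.

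The low states $i\in\{0,1\}$ require separate, direct treatment owing to the $\beta$-cutoff decision and the length-$1$ race. In \statezero{} the attacker's block is published immediately or withheld according to whether its realized reward exceeds $\beta$ (\Cref{eq:zerotoone,eq:zerotozero}), and in \stateone{} canonicalization is mediated by the race state and the tie-breaking parameter $\gamma$. The corresponding expected rewards $f_0$ and $f_1$ already fold in these within-state probabilities (computed in \Cref{app:general-f0-f1}), so each is simply weighted by the stationary probability $p_0$, resp.\ $p_1$, of occupying the state. Assembling the three contributions then yields $f_0 p_0 + f_1 p_1 + \alpha\sum_{i\ge2} f_i p_{i-1}$.

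Finally I would verify that the time-normalization is internally consistent: the block-event (transition) rate is $1/(1-\lambda)$ with $\lambda$ from \Cref{lem:lambda}, the time-accrual inside each $f_i$ is already integrated against Erlang densities at exactly this rate, and the canonical chain is normalized to grow at rate $1$, so that the stationary-weighted sum returns reward per unit time with no residual rate factor. I expect this normalization check and the absolute convergence of the series to be routine; the substantive work is the per-state frequency attribution described above.
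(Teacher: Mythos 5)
Your proposal is correct and follows essentially the same route as the paper's proof: weight $f_0,f_1$ by the occupancy probabilities $p_0,p_1$, and attribute each state-$i$ reward ($i\geq 2$) only to entries from below, giving the weight $\alpha p_{i-1}$. Your observation that down-transitions (honest blocks reducing the lead) leave the private tip unchanged and so spawn no new accruing block is exactly the paper's ``avoid double counting'' argument, stated slightly more explicitly; the renewal--reward framing and convergence/normalization checks are consistent additions rather than a different method.
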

\begin{proof}
	For \texttt{State 0} and \texttt{State 1}, we multiply the stationary distribution probability by the expected per-state attacker reward to calculate the contribution to the full attacker reward. For \stateinosp, $i\geq 2$, we need to avoid double counting the contributions from each state (e.g., you can transition to \texttt{State 3} from either \texttt{State 2} \textit{or} \texttt{State 4}). To account for this we only consider the probability of arriving in each state from the $i-1$ state, which occurs with probability $\alpha p_{i-1}$. Thus, for each state, we add the contribution to the total attacker reward as $\alpha f_i p_{i-1}$. The resulting value tells us the expected attacker reward per unit time of following a $\beta$-cutoff strategy under the static reward function and as a function of $\alpha,\beta,\gamma.$ 
\end{proof}

\section{Selfish mining with three reward sources}\label{sec:multiplerewards}
Selfish mining strategies were analyzed with \textit{just} transaction fees and \textit{just} block rewards in \cite{eyal2013majority,carlsten2016instability}, respectively.
With the more general notion of miner rewards as defined in \Cref{sec:prelims}, a similarly general analysis is required to describe the profitability of selfish mining under different reward schedules.
The methodology of path counting and integrating the general reward function established in \Cref{sec:generalstatic} works for any static reward functions. We now instantiate a specific aggregate reward function, which more accurately captures complete miner incentives as they exist in Bitcoin today. 
This combined reward function, which we denote $\hat{R}$, is composed of (1) a fixed block reward of size $C$, (2) a linear-in-time transaction fee reward, and (3) an ``extra'' reward of size $E$ awarded to a block based on the outcome of a Bernoulli trial with probability $p$. Note that this new reward function considers the sum of each of these rewards, a more representative model of how miners are rewarded in reality rather than considering each of the rewards in isolation. For more straightforward examples of applying the path-counting technique to single-source reward functions, see \Cref{app:block-rews-only} for only considering block rewards as in \cite{eyal2013majority} and \Cref{app:lin-rews-only} for only considering transaction fees as in \cite{carlsten2016instability}.

\subsection{Rewards \#1 \& \#2: block rewards and transaction fees}
Each block that a miner produces earns a ``fixed block reward'' of magnitude $C$, which is paid directly to the miner as the first transaction in a block. We consider the block reward fixed.\footnote{The Bitcoin block reward is cut in half every four years, which impacts the relative size of the block reward compared to other reward sources. Our model considers the strategies available to miners within the same block reward period.}
\begin{remark}[Block rewards are static]
	Block rewards are a constant function that doesn't depend on $t$,
	\begin{align}\label{eq:blockrews}
		R(t) = C.
	\end{align}
	As such, they are static because each block reward is identically distributed no matter the timestamp of the parent block.
\end{remark}
The miners are also paid through the contents of the block they create. In particular, the transactions themselves specify a fee\footnote{In Bitcoin, the UTXO model defines a set of inputs and outputs for a transaction. Any balance that doesn't specify an output is claimable by the miner.} to be paid to the miner for including the transaction in the block. As in \cite{carlsten2016instability}, we start by assuming transaction fees arrive at a deterministic rate and are fully claimable by any subsequent block. See \cite{tsabary2018gap} for empirical measurements justifying the linear-in-time transaction fee rewards. 
\begin{remark}[Deterministic transaction fees with fully claiming blocks are static]
	Using the \cite{carlsten2016instability} definition of fixed-rate transaction fee arrival, we have
	\begin{align}\label{eq:txnrews}
		R(t) = t.
	\end{align}
	This reward is static, as it is deterministic and the same for all blocks (only depending on timestamp of the parent block).
\end{remark}

\subsection{Rewards \#3: non-deterministic extra rewards}
We also introduce a third type of reward to our model, motivated by the reality that some blocks have much higher transaction fee revenue than others due to contention. \cite{ZurAET23} use a similar model to capture high-fee-paying transactions in addition to block rewards; see \Cref{subsec:relwork} for further discussion. Consider, for example, that a new type of transaction can become available at a specific block height, and only a fixed amount of those transactions are valid (e.g., the first 10,000 transactions that purchase a specific NFT). To get their transaction included, participants submit bids specifying the fee they will pay to the block producer for higher-priority inclusion (assuming transactions are ordered by fee). This contention for block space leads to much higher revenue for the miner (who serves as the auctioneer) because even assuming infinite block sizes, the finite nature of the transaction type induces the competition (sometimes referred to as a ``priority gas auction'' \cite{daian2019flash}). We model this reward as a fixed size ``extra reward'' of magnitude $E$ available to a miner of a block with probability $p$ (a Bernoulli trial) and independent of time. We refer to this reward function as ``Bernoulli rewards.'' 

\begin{remark}[Bernoulli rewards are static]
	Bernoulli rewards are static because each block has the same distribution of rewards according to the outcome of the trial,
	\begin{align}\label{eq:bernoullirews}
		R(t) = \begin{cases}
			E & \text{if } X=1\\
			0 & \text{otherwise},
		\end{cases}
		\quad \text{where } X \sim \text{Bernoulli}(p).
	\end{align}
\end{remark}

Note that this model doesn't allow for the ``predictability'' of these Bernoulli rewards. Since miners may know a priori what block height a new set of transactions will arrive at, miners' strategy space would be different than the standard selfish mining strategies we explore below. See \Cref{sec:conclusion} for more discussion.

\begin{definition}[Reward function instantiation, $\hat{R}$]
	Combining the three reward sources (\Cref{eq:blockrews,eq:txnrews,eq:bernoullirews}), we have the full reward function, which we denote as $\hat{R}$,
	\begin{align}\label{eq:fullrews}
		\hat{R}(t) &= C + t + E \cdot \mathds{1}[X=1], \; X\sim \text{Bernoulli}(p).
	\end{align}
\end{definition}
Recall that the path-counting technique defined in \Cref{sec:generalstatic} applies to any static reward function. Since $\hat{R}$ is the sum of three independent, static rewards sources, it is static itself, and thus, we can analyze it. Under $\hat{R}$, we seek to calculate the attacker reward (\Cref{def:fullreward}). Following the structure above, we define the Markov Chain as a function of $\hat{R}$, which induces a stationary distribution $p_i$ before explicitly calculating the per-state attacker reward $f_i$. 
For the derivation of the stationary distribution, an instantiation of the general technique in \Cref{subsec:strategies}, and the Markov chain under this combined reward function, see \Cref{app:transition-probs-instance}. For the derivation of the expected attacker rewards, an instantiation of the general technique in \Cref{subsec:perstateattacker}, see \Cref{app:expected-rewards-instance}.

\subsection{Numerical results and discussion}\label{subsec:numerical}

\begin{figure}
	\centering
	\includegraphics[width=\linewidth]{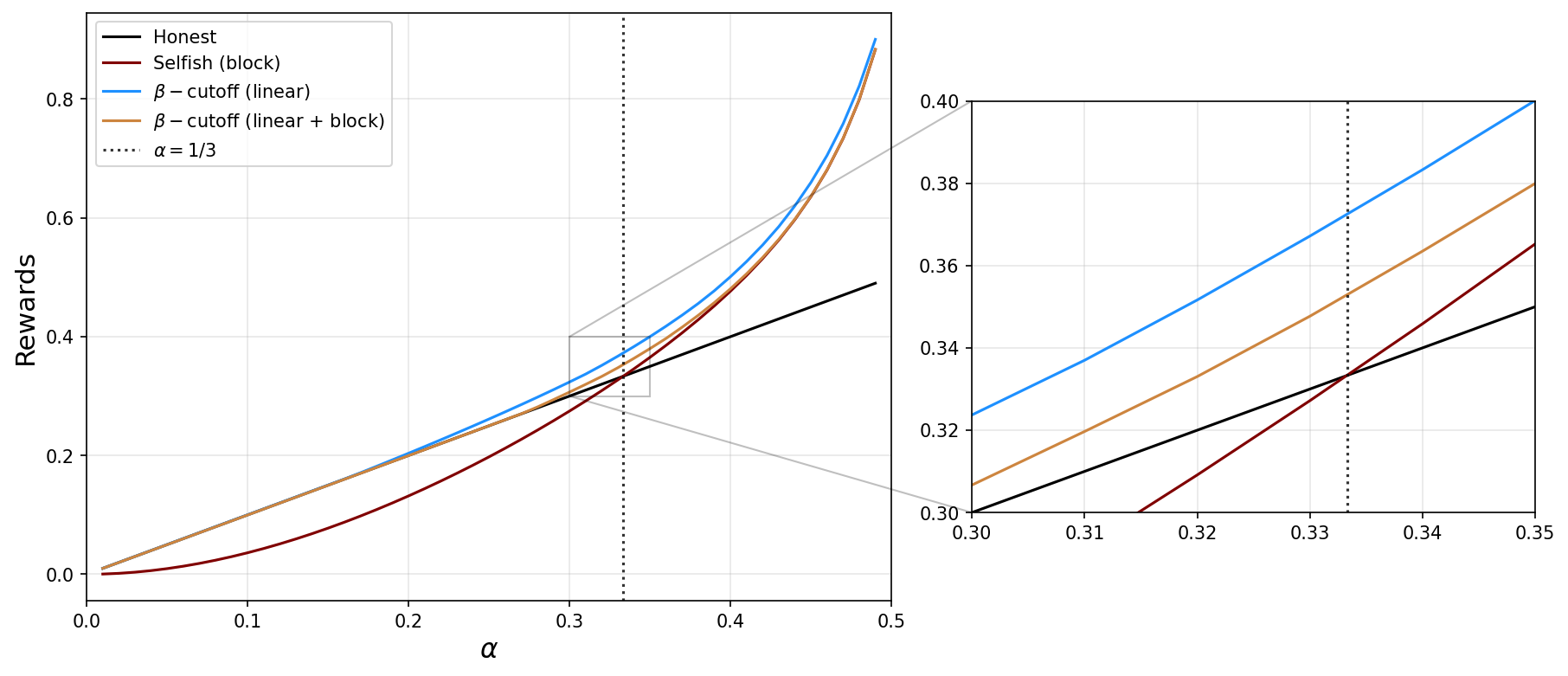}
	\caption{The attacker rewards as a function of $\alpha$ under different metrics of rewards. We consider miners who optimize for block rewards, linear-in-time rewards, or a combination of both. Considering both rewards together paints a more realistic picture of the protocol risk.}
	\label{fig:interpolation}
\end{figure}
\paragraph*{Linear-in-time transaction fees and block rewards.}
\Cref{fig:interpolation} analyzes the simple combination of the linear-in-time transaction fee rewards and block rewards. As expected, the attacker rewards under this combined function interpolates between the two extremes. \texttt{Selfish} (in red) shows the percentage of the block rewards collected when always hiding in \texttt{State 0} (which is exactly the reward in \cite{eyal2013majority} – see \Cref{app:block-rews-only} for the full derivation). \texttt{$\beta-$cutoff (linear)} (in blue) shows the percentage of the linear-in-time transaction fees collected on the attacker chain when choosing $\beta$ to maximize this ratio (which is exactly the reward in \cite{carlsten2016instability} – see \Cref{app:lin-rews-only} for the full derivation). \texttt{$\beta-$cutoff (linear + block)} (in tan) shows the attacker's reward when considering both reward sources together. 
One interpretation of \Cref{fig:interpolation} examines how different reward regimes can lead to dramatically different conclusions regarding the ``risk of attack'' a protocol faces. 
In this case, the selfish miner who only optimizes for the ratio of block rewards is not profitable until $\alpha=1/3$. On the other hand, if we only consider the fraction of linear-in-time transaction fees capturable by a $\beta$-cutoff selfish miner, the story looks much worse. 
In particular, that miner becomes profitable around $\alpha=0.15$. Considering both rewards results in a more measured conclusion, where the strategy becomes profitable around $\alpha=0.25$. 
By varying the relative size of the block reward compared to the per-unit linear-in-time transaction fees, we can thus fully capture the dynamics of both reward models by interpolating between the two strategies, which consider the sub-rewards in isolation. Additionally, this figure can be interpreted qualitatively. We see that the attacker considering both rewards (tan) behaves \textit{less aggressively} than the linear optimizing attacker (blue) for $\alpha \in [0.15, 0.25]$, as the optimal reward in that range is equivalent to honest. Conversely, for $\alpha \in [0.3, 0.33]$, a pure selfish mining strategy would not be profitable; thus, the attacker considering both rewards would be \textit{more aggressive} than the block-reward maximizing miner (who would choose to mine honestly).

\begin{figure}
	\centering    
	\includegraphics[width=0.63\linewidth]{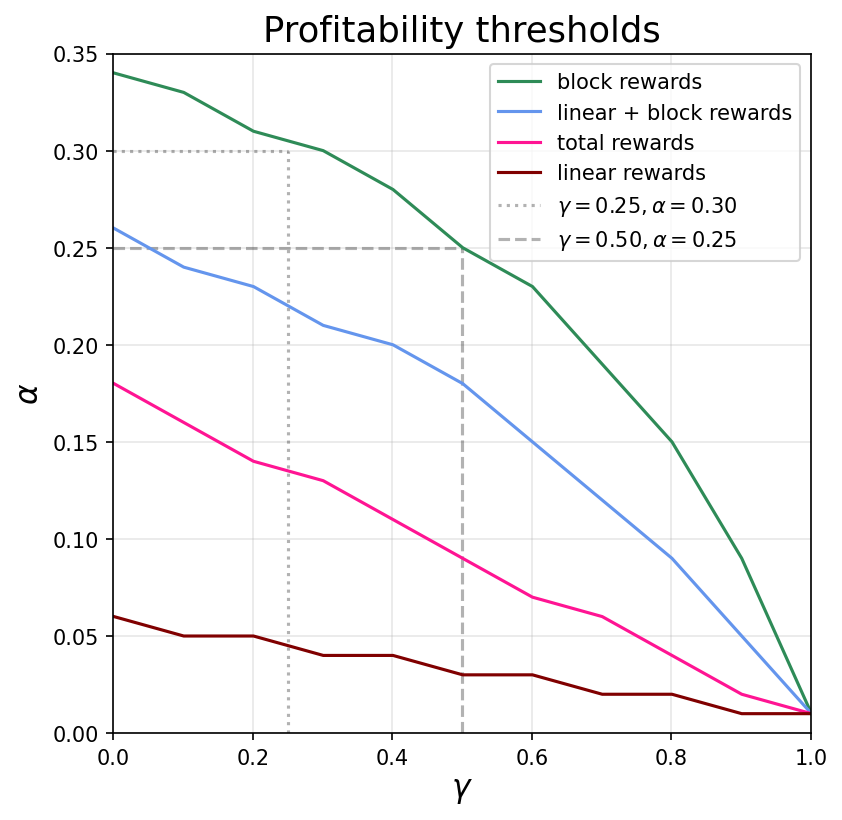}
	\caption{Demonstrating the $\alpha$ at which each strategy becomes profitable over honest as a function of $\gamma$. This extends Figure~3 from \cite{eyal2013majority} to include more strategies. Each respective strategy considers profitability when only measuring a subset of the total rewards. For example, \texttt{linear + block rewards} (in blue) denotes a $\beta-$cutoff strategy for $\alpha$ profitable if, when selecting $\beta$ to maximize the sum of linear and block rewards, the expected attacker reward exceeds $2\alpha$.}
	\label{fig:threshold-alphas}
\end{figure}

\paragraph*{Profitability thresholds.}
\Cref{fig:threshold-alphas} shows the value of $\alpha$ at which various strategies become profitable under different reward sources as a function of $\gamma$. This extends Figure~3 of \cite{eyal2013majority} to include more strategies. For each $\gamma$, we consider the optimal $\beta$ cutoff for an attacker, maximizing block, linear, and total rewards, respectively. For each candidate $\alpha$, we check if the optimal $\beta$ results in a total reward that exceeds the benchmark of the honest performance under that reward function (i.e., the proportional block rewards from honest mining). We find the lowest candidate $\alpha$ such that the rewards exceed the benchmark and identify that as the profitability threshold. Intuitively, this is the fraction of the mining power needed to perform this strategy profitably. 

For the pure selfish miner (in green), we see that the profitability thresholds of $1/3,0.3, 0.25$ for $\gamma=0,0.25,0.5$ are identical to \cite{eyal2013majority}. When considering just linear and block rewards (in blue) and the total rewards (linear + block + bernoulli) (in pink), we see that for all values of $\gamma$, the profitability threshold decreases significantly. For example, at $\gamma=0$, the profitability threshold is reduced from $1/3 \rightarrow 0.26 \rightarrow 0.18$ (reductions of $22\%$ and $31\%$ respectively) when considering the different reward sources. Similarly, at $\gamma=0.5$, the profitability threshold is reduced from $0.25 \rightarrow 0.18 \rightarrow 0.09$ (reductions of $28\%$ and $50\%$ respectively).

Interestingly, the attacker that only considers linear-in-time transaction fees (shown in red) is profitable (measured against just linear-in-time rewards) for nearly all values of $\alpha$. While this may seem concerning, we believe that only linear-in-time rewards are not a large enough component of the current Bitcoin incentive structure to warrant concern. The aggregate view of the rewards (e.g., total shown in pink) more accurately represents rewards as they exist in Bitcoin today. These results only consider the case where all three components are approximately equal in magnitude. See \Cref{sec:conclusion} for a discussion on how choosing the relative sizes of these different reward sources based on empirical values would be a valuable application of our technique.

\paragraph*{Measuring Bernoulli reward.}
\begin{figure}
	\centering
	\includegraphics[width=\linewidth]{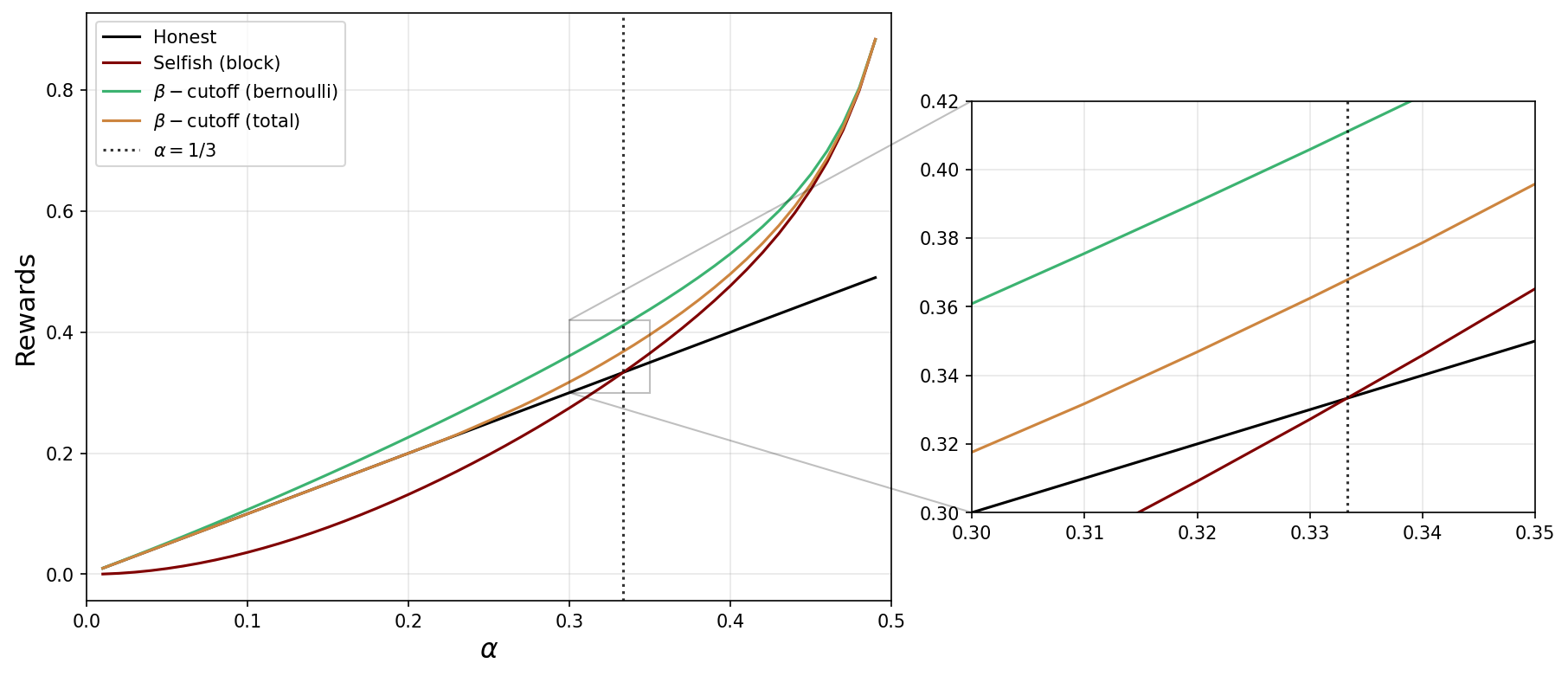}
	\caption{The attacker rewards as a function of $\alpha$ under different metrics of rewards. We consider miners who optimize for block rewards, Bernoulli rewards, and the full $\hat{R}$ containing block, Bernoulli, and linear rewards. Note that the Bernoulli reward-optimizing attacker is profitable for all values of $\alpha$ and meaningfully deviates from honest for $\alpha>0.1$.}
	\label{fig:bernoullis}
\end{figure}
\Cref{fig:bernoullis} examines the profitability of two other mining strategies: optimizing $\beta$ for Bernoulli rewards (in green) versus optimizing $\beta$ for the sum of linear, block, and Bernoulli rewards (in tan). Again, the combined rewards interpolate between the Bernoulli and the block-optimizing miners. For the miner maximizing over all three rewards, we normalize them each to have an expected value of $1$ per block. The miner who only considers Bernoulli rewards (in green) is always profitable and significantly outperforms honest when $\alpha\geq 0.1$. 

Bernoulli rewards aim to capture the volatility of MEV, which depends greatly on exogenous randomnesss (e.g., global capital market volatility). As the block reward continues on it's halving schedule and if transaction fees for Bitcoin transfers remain relatively low, MEV may come to dominate the miner incentives in the long-term. The fact that a miner becomes profitable under MEV rewards at such a low value of $\alpha$ may be a canary in the coal mine for the endgame analysis of selfish mining under MEV. We see understanding how Bitcoin MEV may play out under increased financial activity secured by the chain (e.g., through Bitcoin L2s that need to post data to the base chain itself) as one of the most important avenues for future work. 

See \Cref{app:extended-figures} for more numerical examples and simulation results confirming the accuracy of the technique.

\section{Conclusion and future work}\label{sec:conclusion}
We hope this work is a starting point for a more complete picture of participants' incentives in permissionless consensus mechanisms. \textbf{Our practical contributions of analyzing selfish mining under a more representative reward model in \Cref{sec:multiplerewards}} fills a long-standing gap in the selfish mining literature. \textbf{Our modeling contribution of general rewards in the NCG and characterizing properties of observed reward sources in \Cref{sec:prelims,subsec:properties}} serves as the basis for a richer theoretical treatment of MEV and its implications for consensus. \textbf{Our methodological contribution of presenting a path-counting technique to calculate expected attacker profit under general static rewards in \Cref{sec:generalstatic}} provides a turn-key solution for future work to analyze selfish mining under newly identified reward sources.  
To conclude, we outline many potential future research directions.

\emph{Applying our methodology more broadly.} Our reward instantiation in \Cref{sec:multiplerewards} is a reasonably realistic model of reward sources in the Bitcoin blockchain today.
The methodology and instantiation represent a significant step in understanding the risk of selfish mining in the presence of multi-faceted rewards, especially since prior work generally considered one reward source at a time. 
However, empirical analysis may strengthen our results by forming a more nuanced understanding of these rewards in practice (e.g., measuring the relative size and probability of different MEV events). Note that our methodology still applies to any static reward sources that can be analytically calculated using the path-counting technique presented in \Cref{sec:generalstatic}. 
Beyond explicitly using our methodology, our technique has relatively straightforward extensions that can reach beyond static rewards and $\beta-$cutoff strategies.

\emph{Extending our methodology.} There are several natural extensions to our methods. For example, considering the profitability of $\beta$-cutoff selfish mining under non-static reward functions is feasible. Such reward functions depend on additional information not captured in the states of the Markov Chain. 
However, suppose that the additional information is exogenous to the chain and independent of views. 
In that case, it is possible to augment the state space of the Markov Chain to include this information.
Sme reward sources exhibit ``periodic'' behavior; in the case of Babylon (\Cref{ex:babylon}), elevated rewards persisted for a seven-block period. A simple modification of the Markov chain would allow the rewards to depend on whether the system was in a ``high'' versus ``low'' regime.

Another extension is to study MDP-based optimal strategies as in \cite{sapirshtein2017optimal} rather than $\beta$-cutoff selfish mining. \cite{ZurAET23} demonstrate the impact of changing the reward function on optimal selfish mining profits when considering the combination of block rewards and occasional ``whale'' transactions, and they note that the resulting large state spaces were intractable with traditional MDP solving tooling and required machine learning. Considering how to more succinctly represent multi-reward state spaces or using the Deep RL approach with more combinatorial rewards are both promising directions. While we consider strategies that make decisions based on the realization of rewards in the \emph{current} block, the broader MDP strategy space can be future-looking; for example, an attacker may want to start creating a hidden chain of several blocks in advance of an anticipated large reward (e.g., from an NFT drop occurring at a specific block height). 


\emph{A complete picture of consensus incentives.}
As demonstrated in \Cref{ex:babylon,ex:lowcarb}, modern blockchains have faced and will continue to face distortion of consensus incentives from the application layer handling growing amounts of economic activity. \Cref{subsec:properties} is a first step at modeling properties of general reward functions, but applying these properties to MEV beyond the transaction fee (\Cref{sec:examples}) and LVR (\Cref{app:lvr-examples}) case studies remains open. It would be interesting to derive a set of necessary and sufficient properties to fully taxonomize MEV. Studying heterogeneity of reward sources was out of the scope of this work, but expanding the properties of reward functions when block producers have highly different rewards realizations is another key future direction.


	\bibliography{aft}

\begin{thebibliography}{10}

\bibitem{babaioff2024optimality}
Moshe Babaioff and Noam Nisan.
\newblock On the optimality of eip-1559 for patient bidders (draft--comments welcome).
\newblock 2024.

\bibitem{bahrani2024centralization}
Maryam Bahrani, Pranav Garimidi, and Tim Roughgarden.
\newblock Centralization in block-building and proposer-builder separation.
\newblock In {\em International Conference on Financial Cryptography and Data Security}, pages 331--349. Springer, 2024.

\bibitem{bahrani2024transaction}
Maryam Bahrani, Pranav Garimidi, and Tim Roughgarden.
\newblock Transaction fee mechanism design with active block producers.
\newblock In {\em International Conference on Financial Cryptography and Data Security}, pages 85--90. Springer, 2024.

\bibitem{bahrani2024undetectable}
Maryam Bahrani and S~Matthew Weinberg.
\newblock Undetectable selfish mining.
\newblock In {\em Proceedings of the 25th ACM Conference on Economics and Computation}, pages 1017--1044, 2024.

\bibitem{brown2019formal}
Jonah Brown-Cohen, Arvind Narayanan, Alexandros Psomas, and S~Matthew Weinberg.
\newblock Formal barriers to longest-chain proof-of-stake protocols.
\newblock In {\em Proceedings of the 2019 ACM Conference on Economics and Computation}, pages 459--473, 2019.

\bibitem{capponi2024proposer}
Agostino Capponi, Ruizhe Jia, and Sveinn Olafsson.
\newblock Proposer-builder separation, payment for order flows, and centralization in blockchain.
\newblock {\em Payment for Order Flows, and Centralization in Blockchain (February 12, 2024)}, 2024.

\bibitem{carlsten2016instability}
Miles Carlsten, Harry Kalodner, S~Matthew Weinberg, and Arvind Narayanan.
\newblock On the instability of bitcoin without the block reward.
\newblock In {\em Proceedings of the 2016 ACM SIGSAC Conference on Computer and Communications Security}, pages 154--167, 2016.

\bibitem{daian2019flash}
Philip Daian, Steven Goldfeder, Tyler Kell, Yunqi Li, Xueyuan Zhao, Iddo Bentov, Lorenz Breidenbach, and Ari Juels.
\newblock Flash boys 2.0: Frontrunning in decentralized exchanges, miner extractable value, and consensus instability.
\newblock In {\em 2020 IEEE Symposium on Security and Privacy (SP)}, pages 910--927, 2020.
\newblock \href {https://doi.org/10.1109/SP40000.2020.00040} {\path{doi:10.1109/SP40000.2020.00040}}.

\bibitem{lowcarb}
Francesco D'Amato and Michael Neuder.
\newblock Equivocation attacks in mev-boost and epbs.
\newblock \url{https://ethresear.ch/t/equivocation-attacks-in-mev-boost-and-epbs/15338}, 2023.
\newblock Accessed: 2025-02-05.

\bibitem{ethereum_consensus_specs_compute_proposer_index}
{Ethereum Consensus Specifications}.
\newblock Ethereum consensus specifications: Beacon chain.
\newblock \url{https://github.com/ethereum/consensus-specs/blob/dev/specs/phase0/beacon-chain.md}, 2022.
\newblock Accessed: 2025-02-05.

\bibitem{eyal2013majority}
Ittay Eyal and Emin~G{\"u}n Sirer.
\newblock Majority is not enough: Bitcoin mining is vulnerable.
\newblock {\em Communications of the ACM}, 61(7):95--102, 2018.

\bibitem{ferreira2024computing}
Matheus~VX Ferreira, Aadityan Ganesh, Jack Hourigan, Hannah Huh, S~Matthew Weinberg, and Catherine Yu.
\newblock Computing optimal manipulations in cryptographic self-selection proof-of-stake protocols.
\newblock In {\em Proceedings of the 25th ACM Conference on Economics and Computation}, pages 676--702, 2024.

\bibitem{ferreira2022optimal}
Matheus~VX Ferreira, Ye~Lin~Sally Hahn, S~Matthew Weinberg, and Catherine Yu.
\newblock Optimal strategic mining against cryptographic self-selection in proof-of-stake.
\newblock In {\em Proceedings of the 23rd ACM Conference on Economics and Computation}, pages 89--114, 2022.

\bibitem{finkbeiner2025sok}
Colin Finkbeiner, Mohamed~E Najd, Julia Guskind, and Ghada Almashaqbeh.
\newblock Sok: Time to be selfless?! demystifying the landscape of selfish mining strategies and models.
\newblock {\em Cryptology ePrint Archive}, 2025.

\bibitem{goren2019mind}
Guy Goren and Alexander Spiegelman.
\newblock Mind the mining.
\newblock In {\em Proceedings of the 2019 ACM Conference on Economics and Computation}, pages 475--487, 2019.

\bibitem{grunspan2018profitability}
Cyril Grunspan and Ricardo P{\'e}rez-Marco.
\newblock On profitability of selfish mining.
\newblock {\em arXiv preprint arXiv:1805.08281}, 2018.

\bibitem{gupta2023centralizing}
Tivas Gupta, Mallesh~M Pai, and Max Resnick.
\newblock The centralizing effects of private order flow on proposer-builder separation.
\newblock {\em arXiv preprint arXiv:2305.19150}, 2023.

\bibitem{heilman2015eclipse}
Ethan Heilman, Alison Kendler, Aviv Zohar, and Sharon Goldberg.
\newblock Eclipse attacks on bitcoin’s peer-to-peer network.
\newblock In {\em 24th USENIX security symposium (USENIX security 15)}, pages 129--144, 2015.

\bibitem{hou2019squirrl}
Charlie Hou, Mingxun Zhou, Yan Ji, Phil Daian, Florian Tramer, Giulia Fanti, and Ari Juels.
\newblock Squirrl: Automating attack analysis on blockchain incentive mechanisms with deep reinforcement learning.
\newblock {\em arXiv preprint arXiv:1912.01798}, 2019.

\bibitem{bitcoinblock2025}
{mempool.space}.
\newblock Bitcoin block 0000000000000000000 25c7d9798e97c8f5d8502b03f4bd6b99c365991c5f03b, 2025.
\newblock URL: \url{https://mempool.space/block/000000000000000000025c7d9798e97c8f5d8502b03f4bd6b99c365991c5f03b}.

\bibitem{lvr-fees}
Jason Milionis, Ciamac~C Moallemi, and Tim Roughgarden.
\newblock Automated market making and arbitrage profits in the presence of fees.
\newblock {\em arXiv preprint arXiv:2305.14604}, 2023.

\bibitem{lvr}
Jason Milionis, Ciamac~C Moallemi, Tim Roughgarden, and Anthony~Lee Zhang.
\newblock Automated market making and loss-versus-rebalancing.
\newblock {\em arXiv preprint arXiv:2208.06046}, 2022.

\bibitem{bitcoin-book}
Arvind Narayanan.
\newblock {\em Bitcoin and cryptocurrency technologies: a comprehensive introduction}.
\newblock Princeton University Press, 2016.

\bibitem{nayak2016stubborn}
Kartik Nayak, Srijan Kumar, Andrew Miller, and Elaine Shi.
\newblock Stubborn mining: Generalizing selfish mining and combining with an eclipse attack.
\newblock In {\em 2016 IEEE European Symposium on Security and Privacy (EuroS\&P)}, pages 305--320. IEEE, 2016.

\bibitem{neu2022two}
Joachim Neu, Ertem~Nusret Tas, and David Tse.
\newblock Two more attacks on proof-of-stake ghost/ethereum.
\newblock In {\em Proceedings of the 2022 ACM Workshop on Developments in Consensus}, pages 43--52, 2022.

\bibitem{neuder2019selfish}
Michael Neuder, Daniel~J Moroz, Rithvik Rao, and David~C Parkes.
\newblock Selfish behavior in the tezos proof-of-stake protocol.
\newblock {\em arXiv preprint arXiv:1912.02954}, 2019.

\bibitem{neuder2020defending}
Michael Neuder, Daniel~J Moroz, Rithvik Rao, and David~C Parkes.
\newblock Defending against malicious reorgs in tezos proof-of-stake.
\newblock In {\em Proceedings of the 2nd ACM Conference on Advances in Financial Technologies}, pages 46--58, 2020.

\bibitem{neuder2021low}
Michael Neuder, Daniel~J Moroz, Rithvik Rao, and David~C Parkes.
\newblock Low-cost attacks on ethereum 2.0 by sub-1/3 stakeholders.
\newblock {\em arXiv preprint arXiv:2102.02247}, 2021.

\bibitem{nisan2023serial}
Noam Nisan.
\newblock Serial monopoly on blockchains.
\newblock {\em arXiv preprint arXiv:2311.12731}, 2023.

\bibitem{oz2023time}
Burak {\"O}z, Benjamin Kraner, Nicol{\`o} Vallarano, Bingle~Stegmann Kruger, Florian Matthes, and Claudio~Juan Tessone.
\newblock Time moves faster when there is nothing you anticipate: The role of time in mev rewards.
\newblock In {\em Proceedings of the 2023 Workshop on Decentralized Finance and Security}, pages 1--8, 2023.

\bibitem{oz2024wins}
Burak {\"O}z, Danning Sui, Thomas Thiery, and Florian Matthes.
\newblock Who wins ethereum block building auctions and why?
\newblock {\em arXiv preprint arXiv:2407.13931}, 2024.

\bibitem{penna2024serial}
Paolo Penna and Manvir Schneider.
\newblock Serial monopoly on blockchains with quasi-patient users.
\newblock {\em arXiv preprint arXiv:2405.17334}, 2024.

\bibitem{sapirshtein2017optimal}
Ayelet Sapirshtein, Yonatan Sompolinsky, and Aviv Zohar.
\newblock Optimal selfish mining strategies in bitcoin.
\newblock In {\em Financial Cryptography and Data Security: 20th International Conference, FC 2016, Christ Church, Barbados, February 22--26, 2016, Revised Selected Papers 20}, pages 515--532. Springer, 2017.

\bibitem{schwarz2022three}
Caspar Schwarz-Schilling, Joachim Neu, Barnab{\'e} Monnot, Aditya Asgaonkar, Ertem~Nusret Tas, and David Tse.
\newblock Three attacks on proof-of-stake ethereum.
\newblock In {\em International Conference on Financial Cryptography and Data Security}, pages 560--576. Springer, 2022.

\bibitem{schwarz2023time}
Caspar Schwarz-Schilling, Fahad Saleh, Thomas Thiery, Jennifer Pan, Nihar Shah, and Barnab{\'e} Monnot.
\newblock Time is money: Strategic timing games in proof-of-stake protocols.
\newblock {\em arXiv preprint arXiv:2305.09032}, 2023.

\bibitem{tas2023bitcoin}
Ertem~Nusret Tas, David Tse, Fangyu Gai, Sreeram Kannan, Mohammad~Ali Maddah-Ali, and Fisher Yu.
\newblock Bitcoin-enhanced proof-of-stake security: Possibilities and impossibilities.
\newblock In {\em 2023 IEEE Symposium on Security and Privacy (SP)}, pages 126--145. IEEE, 2023.

\bibitem{tsabary2018gap}
Itay Tsabary and Ittay Eyal.
\newblock The gap game.
\newblock In {\em Proceedings of the 2018 ACM SIGSAC conference on Computer and Communications Security}, pages 713--728, 2018.

\bibitem{yang2024decentralization}
Sen Yang, Kartik Nayak, and Fan Zhang.
\newblock Decentralization of ethereum's builder market.
\newblock {\em arXiv preprint arXiv:2405.01329}, 2024.

\bibitem{yang2022sok}
Sen Yang, Fan Zhang, Ken Huang, Xi~Chen, Youwei Yang, and Feng Zhu.
\newblock Sok: Mev countermeasures: Theory and practice.
\newblock {\em arXiv preprint arXiv:2212.05111}, 2022.

\bibitem{ZurAET23}
Roi~Bar Zur, Ameer Abu{-}Hanna, Ittay Eyal, and Aviv Tamar.
\newblock Werlman: To tackle whale (transactions), go deep {(RL)}.
\newblock In {\em 44th {IEEE} Symposium on Security and Privacy, {SP} 2023, San Francisco, CA, USA, May 21-25, 2023}, pages 93--110. {IEEE}, 2023.
\newblock \href {https://doi.org/10.1109/SP46215.2023.10179444} {\path{doi:10.1109/SP46215.2023.10179444}}.

\end{thebibliography}
	
	\newpage
	\appendix

\section{Extended examples}\label{app:extended-examples}

\subsection{Timing games}\label{app:timing-games}

\begin{example}[Timing games]\label{ex:timinggames}
In Proof-of-Stake protocols, no random mining process dictates the progression of time. Instead, time is explicitly discretized, and the protocol elects a leader as the sole block producer for a given slot. As in Proof-of-Work, stakers who produce valid blocks are compensated with new tokens (issuance)
– a protocol-prescribed consensus reward. For a proposed block to be accepted by the remainder of the network, it must arrive at the other nodes by a deadline. Typically, the protocol specifies that the proposer releases the block relatively early to ensure the rest of the network has plenty of time to receive it before deciding which chain to extend (e.g., in the Ethereum protocol, there is a four-second delay between the expected publication time and when the next voters determine whether the block was available or not). If the consensus rewards fully captured the incentives of stakers, the proposer would never delay their block publication, as any delay would increase the risk of the block not being received due to network latency. Yet  \cite{schwarz2023time} and \cite{oz2023time} model and measure the increase in rewards for intentionally delaying the publication of a block, a phenomenon referred to as ``timing games.'' Here again, application layer rewards distort the overall incentives of the game. Proposers benefit from the fact that any additional time allows for increased transaction fees and MEV to accrue. Thus, in some cases, delaying their block and risking losing the entire reward may be worth waiting extra time. 
\end{example}

\subsection{LVR emphemerality}\label{app:lvr-emphermeral}
\begin{example}[LVR is ephemeral in Proof-of-Work]\label{ex:lvr-ephemeral}
    LVR, as presented in \cite{lvr}, measures the profit of arbitrageurs who are instantaneously balancing the price of a decentralized exchange (abbr. DEX) with an infinitely deep centralized exchange (abbr. CEX). In other words, the profits depend on constantly executing trades on both venues to ensure the DEX price matches the CEX. In leader-election protocols like Proof-of-Stake, this might be reasonable. Once a leader is known, they start performing the trades and can be certain that the block they produce will contain each of those trades and become part of the canonical chain. When the next block producer is uncertain, as in Proof-of-Work, this model breaks down. Miners don't know they will produce a block a priori; thus, they will not execute trades on the CEX while mining. Instead, a more reasonable strategy is to perform the DEX leg of the arbitrage a single time as the first transaction in their block once they mine it and only then execute the CEX leg to complete the arbitrage. This distinction is critical. In the \cite{lvr} model, LVR is monotone increasing and accumulating for the block production period. In Proof-of-Work, the price on the CEX could retrace by the time a block is mined, eliminating the arbitrage profit that may have been present earlier. This ``ephemerality'' (and its inverse ``persistence'' \Cref{def:persistent}) is one property of rewards that we capture in our framework. 
\end{example}

\section{Further properties of reward functions}\label{app:max-blocks-persistent}
\begin{definition}[Maximum Rewards \& Maximizing Blocks]\label{def:max-block}
    Given a reward function $R$, we define the maximizing block function $B_{\text{opt}}$ as 
    \[
    	B_{\text{opt}}(t,V,B,r):=\underset{B'\in\mathcal{B}(t,V,B,r)}{\arg\max} R(t,V,B,r,B').
    \]
    We further define the maximum reward function $R_{\text{opt}}$ as 
    \[
    	R_{\text{opt}}(t,V,B,r):=R(t,V,B,r,B')
    \] for some $B'\in B_{\text{opt}}(t,V,B,r)$.
\end{definition}

Observe that if a reward function $R$ is static, then $\Ropt(t,V,B,r,B')$ can be rewritten as a two-variable function of just $r$ and the time $\Delta$ between $\tstamp(B)$ and $t$. 

We next define \emph{persistent rewards}, which arrive at some time and can be claimed at most once. Upon arrival, they remain indefinitely claimable by any block whose ancestors have not already claimed them. Let $\Rc(B)$ denote the amount of reward attributed to the block creator if the block becomes canonical and $\chain(B)$ the set of blocks on the ancestral path of $B$ (including $B$).

\begin{definition}[Persistent Rewards]\label{def:persistent}
A reward function $R$ is \emph{persistent} if for all realizations of $\vec{t},\vec{m},r$, at any time $t$, for all blocks $B$ in the resulting view $V$, we have:
\begin{itemize}
    \item for all $B'\in\mathcal{B}(t,V,B,r)$, 
    \begin{align}\label{eq:persistence}
        R(t,V,B,r,B')\leq R_{\text{opt}}(t,V_0,B_0,r)-\sum_{B''\in \chain(B)} \Rc(B''),
    \end{align}
    \item there exists some $B'\in\mathcal{B}(t,V,B,r)$ for which the above holds with equality.
\end{itemize}
\end{definition}

We sometimes call a non-persistent reward function \emph{ephemeral}. \Cref{rem:patient-persistent} highlights that transaction fees are persistent if the users creating the transactions are patient (willing to wait for inclusion and not cancel pending transactions). On the other hand, fees from transactions submitted by impatient users (as in \Cref{rem:impatient}) are not persistent since the canceled transactions are no longer claimable by future blocks. 

Persistent rewards are not affected by orphan or uncle blocks, but they \emph{may} be view-dependent since they are affected by the claimed rewards on the ancestral path of a block. The following lemma states that persistent rewards functions are view-independent if all blocks in all valid views claim the maximum available rewards.

\begin{lemma}[Persistent \& Maximizing Blocks $\implies$ View-Independent]\label{lem:all-claim}
    Let $R$ be persistent. Then $R$ is view-independent if for all $t$, all $V\in\mathcal{V}_t$, all parent-child blocks $B,B'$ in $V$, and all $r$, we have $\Rc(B')=\Bopt(t,V,B,r)$.
\end{lemma}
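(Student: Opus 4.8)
The plan is to verify the two requirements of view-independence (\Cref{def:viewindependent}) in turn: that for any two views $V_1,V_2\in\mathcal{V}_{t'}$ satisfying the max-claiming hypothesis, with extended parents $B_1,B_2$ sharing timestamp $t'$, we have (i) $\mathcal{B}(t,V_1,B_1,r)=\mathcal{B}(t,V_2,B_2,r)$ for all $r$, and (ii) the reward $R(t,V_i,B_i,r,B')$ has the same law for each valid $B'$. Both will follow once I show that the total reward available to a block extending $B_i$ at time $t$ is a function of $(t,t',r)$ alone and does not otherwise depend on the view.

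First I would establish a telescoping identity for the reward already claimed along the ancestral path. Reading the hypothesis as the statement that every block claims the maximum available reward, $\Rc(B')=\Ropt(t,V,B,r)$, I would induct along $\chain(B_i)=(B_0,B^{(1)},\dots,B^{(k)}=B_i)$. Persistence (\Cref{def:persistent}), applied with its equality clause, says the reward that $B^{(j)}$ claims at its own timestamp equals $\Ropt(\tstamp(B^{(j)}),V_0,B_0,r)$ minus what its strict ancestors already claimed; summing these over $j$ telescopes to $\sum_{B''\in\chain(B_i)}\Rc(B'')=\Ropt(t',V_0,B_0,r)$, using $\tstamp(B_i)=t'$. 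Invoking persistence once more for a block extending $B_i$ at time $t$, the maximum claimable reward equals $\Ropt(t,V_0,B_0,r)-\Ropt(t',V_0,B_0,r)$. Crucially, since $\Ropt(\cdot,V_0,B_0,r)$ is anchored at the genesis block, this quantity depends on the two views only through the shared timestamp $t'$.

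With this in hand, requirement (i) follows because a valid block extending $B_i$ is precisely one that claims an admissible portion of the available persistent rewards, and that pool $\Ropt(t,V_0,B_0,r)-\Ropt(t',V_0,B_0,r)$ is now identical for $V_1$ and $V_2$; this yields the identification of $\mathcal{B}(t,V_1,B_1,r)$ with $\mathcal{B}(t,V_2,B_2,r)$ (up to the usual bijection between valid blocks across views). For requirement (ii), the reward of a fixed valid $B'$ is a deterministic function $g(t,t',r,B')$ of these view-independent inputs, so matching the two laws reduces to showing that the posterior of $g$ under $r,\vec{t},\vec{m}\mid V_1$ agrees with that under $r,\vec{t},\vec{m}\mid V_2$.

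I expect this last step to be the main obstacle, since conditioning on a full view could in principle skew the posterior of $r$ differently for $V_1$ and $V_2$. The resolution I would pursue is that, because every block up to $t'$ claims the maximum, knowing a view in $\mathcal{V}_{t'}$ pins down the aggregate $\Ropt(t',V_0,B_0,r)$ and constrains $r$ only through reward availability on $[0,t']$, a constraint identical in form for any realizable max-claiming view terminating at $t'$. Hence the conditional law of the increment $\Ropt(t,V_0,B_0,r)-\Ropt(t',V_0,B_0,r)$, and therefore of $g(t,t',r,B')$, is the same under $V_1$ and $V_2$, giving requirement (ii). The delicate bookkeeping is entirely in formalizing that the view reveals nothing about $r$ beyond what happens on $[0,t']$; the accounting part (the telescoping) is routine once persistence is applied with its equality clause.
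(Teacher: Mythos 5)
Your core accounting step is sound: under the max-claiming hypothesis, the equality clause of \Cref{def:persistent} applied block by block along $\chain(B_i)$ does give $\sum_{B''\in\chain(B_i)}\Rc(B'')=\Ropt(t',V_0,B_0,r)$, and hence that the maximum reward available to a block extending $B_i$ at time $t$ is $\Ropt(t,V_0,B_0,r)-\Ropt(t',V_0,B_0,r)$, a quantity depending only on $(t,t',r)$. This is the same bookkeeping that powers the paper's argument, though the paper packages it differently: it argues by contradiction, replacing a non-maximizing block $B_1$ by a maximizing block $B_2$ in an otherwise identical view and showing the \emph{same} extending block $B^*$ would receive strictly different rewards in $V_1$ and $V_2$. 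Your telescoping identity is the direct-form counterpart of that computation.

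The genuine gap is the sentence asserting that ``the reward of a fixed valid $B'$ is a deterministic function $g(t,t',r,B')$.'' Persistence constrains only the \emph{maximum} achievable reward: it provides a common upper bound for both views and guarantees that \emph{some} valid block attains it, but it says nothing about how the reward of an arbitrary (non-maximizing) valid block $B'$ depends on the view. \Cref{def:viewindependent} quantifies over \emph{every} valid $B'$, so equality of the total available pool does not by itself yield the second bullet: the reward of a given $B'$ could depend on \emph{which} rewards remain unclaimed along $\chain(B_i)$, not merely on their total (this is exactly the failure mode in \Cref{rem:patient-not-vi}), and your argument as written only establishes view-independence of $\Ropt$. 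The same issue affects the first bullet: persistence is silent about the sets $\mathcal{B}(t,V_1,B_1,r)$ and $\mathcal{B}(t,V_2,B_2,r)$, so their equality is asserted rather than derived. By contrast, the step you flag as the main obstacle---matching the posteriors of $r$ conditioned on $V_1$ versus $V_2$---is not where the paper places the content at all: its proof fixes a realization $r$ consistent with both views and compares rewards pathwise, never engaging with posteriors. So the deferred conditioning argument you sketch is a secondary concern, while the per-block step you pass over quickly is precisely where the lemma's content (and your missing justification) lies.
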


\begin{proof}
    Suppose there is a view $V$ at time $t$ in which some blocks do not claim all rewards. Let $B_1$ mined at $t'$ be the earliest such block. Consider the prefix of $V$ as of time $t'$, and call it $V_1$. Let $B^*\in\Bopt(t,V_1,B_1,r)$ be the reward-maximizing block extending $B_1$. 
    
    Now consider a different view $V_2\in\mathcal{V}_{t'}$ that is identical to $V_1$, except $B_1$ is replaced with a reward-maximizing block $B_2\in\Bopt(t',V_{t'},\text{parent}(B_1),r)$. By persistence applied to $V_2$, we have
    \begin{align*}
        R(t,V_2,B_2,r,B^*) \leq& \Ropt(t,V_0,B_0,r)-\sum_{B''\in\chain_{V_2}(B_2)}\Rc_{V_2}(B'') \\
        =& \Ropt(t,V_0,B_0,r)-\Rc_{V_2}(B_2) -\sum_{B''\in\chain_{V_2}(\text{parent}(B_2))}\Rc_{V_2}(B'') \\
        =& \Ropt(t,V_0,B_0,r)-\Rc_{V_1}(B_1) -\sum_{B''\in\chain_{V_1}(\text{parent}(B_1))}\Rc_{V_1}(B'') \\
        <& \Ropt(t,V_0,B_0,r)-\Rc_{V_1}(B_1)-\sum_{B''\in\chain_{V_1}(B_1)}\Rc_{V_1}(B'') \\
        =& R(t,V_1,B_1,r,B^*).
    \end{align*}
    where the first inequality follows from persistence applied to $V_2$, the next equality is algebra, the next equality is by construction of $V_2$, the next inequality is by assumption that $B_1$ is not reward-maximizing and $B_2$ is, and the last equality is from persistence applied to $V_1$, and in particular invoking the second bullet in the definition of persistence. 

    This is a contradiction, since by view-independence, $B^*$ should have the same reward in $V_1$ and $V_2$.
\end{proof}

    
    

The following lemma shows that static and persistent reward functions accrue linearly over time since the parent block, with a constant slope and intercept across blocks (but may be random depending on $r$). If a reward function is persistent and static, it can be simulated by drawing the randomness of $r$ to set the slope $a$ and the intercept $b$ of the maximum available reward function $\Ropt$. Then, for any block $B$ in any view, the reward for extending $B$ at time $\tstamp(B)+\Delta$ equals $a\cdot \Delta + b$. This is the model of transaction fee accrual in \cite{carlsten2016instability} and MEV accrual in \cite{schwarz2023time}.

\begin{lemma}[Static \& Persistent $\implies$ Linear]\label{lem:linear}
    Let $R$ be static and persistent. If $\Ropt(t,V,B,r)$ is differentiable with respect to $t$, then it is of the form $a(r)\cdot (t-\tstamp(B)) + b(r)$.
\end{lemma}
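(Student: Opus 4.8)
The plan is to reduce $\Ropt$ to a univariate function via staticness, extract an affine-Cauchy functional equation from persistence, and then close using the differentiability hypothesis. By the observation following \Cref{def:max-block}, staticness (\Cref{def:static}) lets us write $\Ropt(t,V,B,r)=g_r(t-\tstamp(B))$ for a function $g_r$ depending only on $r$ and the elapsed time $\Delta:=t-\tstamp(B)$; in particular $\Ropt(t,V_0,B_0,r)=g_r(t)$ because $\tstamp(B_0)=0$. It therefore suffices to prove that $g_r(\Delta)=a(r)\Delta+b(r)$.

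The first step is to turn \Cref{def:persistent} into a usable identity. The first bullet of \Cref{eq:persistence} makes $\Ropt(t,V_0,B_0,r)-\sum_{B''\in\chain(B)}\Rc(B'')$ an upper bound on $R(t,V,B,r,B')$ over all valid $B'$, while the second bullet states that the bound is attained; since $\Ropt$ is the maximum of $R$ over $B'$, we obtain
\[\Ropt(t,V,B,r)=\Ropt(t,V_0,B_0,r)-\sum_{B''\in\chain(B)}\Rc(B'').\]
Applying this with $B=B_0$ (so $\chain(B_0)=\{B_0\}$) forces $\Rc(B_0)=0$.

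Next I would build the functional equation by evaluating a length-two chain in two ways. Fix $r$ and take a realization (achievable since inter-block times have full support on $(0,\infty)$) producing $B_0\to B_1\to B'$, where $B_1$ is mined at some timestamp $s$ directly on genesis and claims maximally, and $B'$ extends $B_1$ at time $t=s+\Delta$. Staticness gives $\Ropt(t,V,B_1,r)=g_r(\Delta)$. The persistence identity, together with $\Rc(B_0)=0$, gives $\Ropt(t,V,B_1,r)=g_r(s+\Delta)-\Rc(B_1)$, where $\Rc(B_1)$ is the reward $B_1$ captures at its creation, namely $g_r(s)$ less any reward available immediately at age $0$ (collected into $b(r):=g_r(0)$, e.g.\ an instantaneous reward a block cannot claim for itself). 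Equating the two evaluations yields the affine-Cauchy relation
\[g_r(s+\Delta)=g_r(s)+g_r(\Delta)-b(r)\qquad\text{for all }s,\Delta>0.\]

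Finally, solving is routine under the differentiability hypothesis. Setting $\tilde g_r:=g_r-b(r)$ reduces the relation to the standard Cauchy equation $\tilde g_r(s+\Delta)=\tilde g_r(s)+\tilde g_r(\Delta)$; differentiating in $s$ shows $\tilde g_r'$ is constant, say $a(r)$, whence $\tilde g_r(\Delta)=a(r)\Delta$ and $g_r(\Delta)=a(r)\Delta+b(r)$, i.e.\ $\Ropt(t,V,B,r)=a(r)(t-\tstamp(B))+b(r)$. I expect the main obstacle to be the third step rather than the final solve: one must correctly account for the claimed amounts $\Rc(\cdot)$ along the chain (in particular pinning down $\Rc(B_1)$ and the age-$0$ intercept) and verify that the constructed views lie in the support $\mathcal{V}$, so that persistence and staticness can be fused into a single clean equation.
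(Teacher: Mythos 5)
Your argument is essentially the paper's: both use staticness to reduce $\Ropt$ to a univariate function of elapsed time since the parent, apply persistence to a short chain through genesis whose intermediate block claims maximally to extract a Cauchy-type functional equation, and close with the differentiability hypothesis (the paper via a difference quotient, you via differentiating in $s$ — equivalent). The only unjustified step is your bookkeeping $\Rc(B_1)=g_r(s)-b(r)$: by the definitions a maximizing block claims exactly the maximum available reward, so $\Rc(B_1)=g_r(s)$ (this is precisely what the paper uses), which gives the pure Cauchy relation $g_r(s+\Delta)=g_r(s)+g_r(\Delta)$ and hence $g_r(\Delta)=a(r)\Delta$; that is still of the claimed form (with $b(r)=0$), so your proof stands once the invented ``age-$0$ intercept'' adjustment is dropped.
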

\begin{proof}
Consider a static and persistent reward function $R$. Since $R$ is static, it is view-independent, so by \Cref{lem:all-claim}, we can restrict attention to views in which all blocks claim all rewards.

Consider the function $\Ropt$. Since $R$ is view-independent (because it is static), we can drop its dependence on $V$. Since $R$ is static, its only dependence on $B$ is through $t-\ts(B)$. Therefore, $\Ropt$ can be written as a function $f(\Delta,r)$, where $\Delta$ is the time since the creation of the parent block. We must show that $f$ takes the form $a(r)\cdot \Delta+ b(r)$.

Fix times $t'<t$ and $r$, Consider a view at time $t$ consisting of three blocks. The genesis block $B_0$, with a child $B'$ mined at $t'$, and grandchild $B$ mined at $t$. 
\begin{align*}
    f(t,r)&=f(t-t',r)+\Rc(t',r) \tag{persistence applies to $B$}\\
    &=f(t-t',r)+f(t',r) \tag{$B'$ claims all rewards}
\end{align*}
Rearranging, dividing by $t-t'$, we get
\[
\frac{f(t,r)-f(t',r)}{t-t'}=\frac{f(t-t')}{t-t'}
\]
Taking the limit $t'\to t$ (which exists since $\Ropt$ is differentiable), the left-hand-side is equal to $d/d\Delta f(t,r)$, while the right-hand-side is equal to $d/d\Delta f(0,r)$. Since the choice of $t$ was arbitrary, we conclude that the derivative of $f$ with respect to $\Delta$ is a function of $r$ and constant for all $\Delta$.

\end{proof}

Note that the transaction fees defined in \cite{carlsten2016instability} are linear; we use this same reward function as part of our instantiation in \Cref{sec:multiplerewards}.

\subsection{Persistent reward examples}\label{app:persistent-reward-examples}

\begin{example}[Patient transaction fees with infinite capacity blocks are persistent]\label{rem:patient-persistent}
    The reward function of a candidate block $B'$ built upon a parent block $B$ is bounded above by the sum of transaction fees not claimed by any block in $\chain(B).$ 
    For any parent block $B$, the block $B'$ that contains all transactions not included in $\chain(B)$ is valid (because users are patient and blocks have infinite size) and satisfies the equality in \Cref{eq:persistence}.
\end{example}
Transaction fees cannot be persistent without infinite capacity blocks because equality will not hold if the block cannot fit all available transactions.

\begin{example}[Bernoulli rewards are not persistent]\label{rem:bernoulli-ephemeral}
    The reward function is the outcome of the Bernoulli trial and does not allow for previous iterations of the trial to be captured in the same block (only one reward per block à la block rewards). This violates the equality condition of \Cref{eq:persistence} and is not persistent.
\end{example}

Patience levels have been studied in the context of transaction fees \cite{nisan2023serial,penna2024serial,babaioff2024optimality}. 
In practice, rewards might persist over some time but not indefinitely. For example, users might have limited patience of a few blocks rather than being fully patient (\Cref{rem:patient-persistent}) or fully impatient (\Cref{rem:impatient}). Other types of MEV may similarly only satisfy ``partial persistence.'' For example, sandwich attacks persist if the DEX price is within the slippage limit of the user's swap. A complete MEV taxonomy is out of scope for this work; see \Cref{sec:conclusion} for a discussion on natural modeling and empirical extensions.

\subsection{Properties of CEX-DEX Arbitrage}\label{app:lvr-examples}
Loss-Versus-Rebalancing (\Cref{ex:lvr-ephemeral}) measures the profits earned by the arbitrageurs who balance the price of a DEX against an infinitely deep CEX. The model of \cite{lvr} assumes that the arbitrageurs continuously trade as the CEX price moves according to a Geometric Brownian Motion (abbr. GBM) stochastic process. This price movement is external and independent of the randomness of the chain and thus is captured by $r$ in our model. 
While the LVR literature does not explicitly model consensus, the profits of these arbitrageurs can be viewed as a form of MEV. The block producer fully controls the on-chain leg of the arbitrage and can replicate the strategy by continuously trading on the CEX while also continuously updating the DEX price within their block.
In Proof-of-Work, this implies that \textit{all} miners are continuously executing trades on the CEX because the next block producer is unknown. 

\begin{example}[LVR is persistent if all miners continuously trade]\label{rem:lvr-persistent}
    All miners trading continuously implies that the CEX and DEX prices are aligned at every block. In any resulting view, a continuously trading miner that mines a block at time $t$ with a parent mined at $t'$ collects the total amount of LVR during the interval $[t',t],$ as per Equation (8) in \cite{lvr}. This reward function always satisfies the second bullet in \Cref{def:persistent} and is thus persistent.
\end{example}

LVR is only persistent if miners constantly trade without knowing a priori that they will mine the subsequent block. Additionally, blocks must have infinite capacity to include the complete set of DEX trades that the miner performs during the mining process. This is consistent with the literature on LVR and might be a reasonable assumption in a Proof-of-Stake protocol where the block producer knows that they have the right to produce a block at an assigned time (e.g., in Ethereum, where the schedule of the following 64 block producers, about 10 minutes worth, is public information \cite{ethereum_consensus_specs_compute_proposer_index}). 
In Proof-of-Work, however, this model of LVR may not be a reasonable assumption as only a single miner will realize the profit from the arbitrage. The miners that lose the race execute only the CEX trades without the corresponding DEX leg of the arbitrage. Performing only the CEX trades \emph{loses} money in expectation. If the CEX price moves up from $p \nearrow p'$, the CEX leg of the arbitrage sells low (marked to the more recent and thus fair price $p'$). The same logic holds when the price moved down from $p \searrow p'$, resulting in the CEX leg buying high.

For this reason, strategic miners would instead perform a ``discrete'' version of the trade, performing the arbitrage only once to align the DEX price to the CEX at the moment of block production.\footnote{For Bitcoin specifically, the ten-minute block times make it unlikely to see significant DEX trading volumes. Discrete LVR is still the correct model for consensus protocols where block producers face uncertainty about whether they will successfully produce the next block (such as DAG consensus and Proof-of-Work with faster block times).}
We refer to this as ``discrete LVR'' because both legs happen simultaneously upon block creation rather than continuously during mining. 
\begin{example}[Discrete LVR is not persistent]\label{rem:discrete-lvr-not-persist}
    Consider a block mined at time $t$ with a parent mined at $t'$. The discrete LVR reward function captures the arbitrage profit from balancing the DEX to a CEX price a single time based on the price movement on the CEX in $[t',t]$. This is \textit{not} persistent. 
    Consider a parent-child pair of blocks $B_1,B_2$ mined at time $t_1 < t_2$ when the CEX price is $p_1 < p_2$ respectively. Assume that DEX and CEX prices are aligned in $B_1$ and $B_2$, and in particular, note that $B_2$ receives a positive discrete LVR reward. Now suppose that at time $t > t_2$, the CEX price retraces back to $p_1$. The maximizing block $B$ that extends $B_1$ at time $t$ has a discrete LVR reward of 0 because the prices on the CEX and DEX match at $t_1$ and $t$.
    However, both $B_2$ and the maximizing block $B$ extending $B_2$ at time $t$ have strictly positive discrete LVR rewards, violating
    \Cref{eq:persistence} in the definition of persistence.
\end{example}

Intuitively, discrete LVR is not persistent because the arbitrage profits can disappear if they are unclaimed at a specific time (just like impatient transaction fees in \Cref{rem:impatient}). The previous two examples characterized how LVR is persistent or ephemeral depending on the leader's advanced knowledge. The following two examples show that LVR is generally not static, except under some locality assumptions. In \Cref{sec:generalstatic}, we analyze the profitability of a selfish mining variant under general static rewards (in particular \Cref{rem:local-lvr-static} below).

We start with a closer examination of the LVR calculation in Equation (8) of \cite{lvr}, which defines LVR over a time interval as the integral of the \textit{instantaneous LVR}. Instantaneous LVR is a function of three variables: the price $P$ of the asset on the CEX, the standard deviation of the GBM representing CEX price movements, and the \emph{marginal liquidity} of the DEX at $P$ (denoted by $|x^{*'} (P)|$ in \cite{lvr}), which is a deterministic function of $P$. 
Observe that to calculate instantaneous LVR at time $t$, knowing the current price level is necessary and sufficient.
The sufficient direction implies LVR is view-independent, while the necessary direction implies LVR is not static. \Cref{rem:lvr-vi,rem:non-local-lvr} formalize this.

\begin{example}[LVR with per-block aligned CEX and DEX prices is view-independent]\label{rem:lvr-vi}
    Restrict the set of views to ones that fully align CEX and DEX prices at each block (e.g., through each miner collecting either discrete LVR as in \Cref{rem:discrete-lvr-not-persist} or continuous LVR as in \Cref{rem:lvr-persistent}).
    Then the LVR reward function for $B'$ extending either $B_1,B_2$ both with timestamp $t'$ in views $V_1,V_2$ respectively depends only on the timestamp of the parent block (and the corresponding CEX price at that time) and the random price movements of the CEX under $r$ after $t'$. Therefore, LVR in this setting is view-independent.
\end{example}

This view-independence arises from the CEX and DEX price alignment at each block, which is similar to the mempool clearing from infinite block sizes in \Cref{rem:fullyclaimingviewindep} and from user impatience in \Cref{rem:impatient}. In these examples, the reward function only depends on events occurring after the parent block is mined. 

\begin{example}[LVR is not static]\label{rem:non-local-lvr}
    No matter the restrictions we place on views and miner strategies, LVR cannot be static because the distribution of rewards depends on the price level, an exogenous variable that changes as a function of time. The reward function for LVR depends on the realized price movements on the CEX during the block creation process, which in turn depends on the price level at that time. 
\end{example}

This last example highlights a significant limitation of static rewards generally – static rewards cannot vary based on exogenous randomness. The same distinction is present in \Cref{rem:patientisstatic} and \Cref{rem:fullyclaimingviewindep}, where the distribution (over external randomness) of the reward function varying in time reduces reward sources from static to only view-independent. The methodology and analysis we present in \Cref{sec:generalstatic,sec:multiplerewards} focus on static rewards as these are capturable in a relatively simple Markov Chain. See \Cref{sec:conclusion} for a discussion on extending the state space of the Markov Chain to capture non-static rewards.

While LVR is not static, we introduce a different reward function that \emph{is} static and argue it approximates LVR within local price neighborhoods.

\begin{example}[Resetting LVR is static]\label{rem:local-lvr-static}
    Restrict the set of views to those where both CEX and DEX prices upon creation of each block are exactly $P$. Resetting LVR is the reward function that starts a new GBM at $P$ for each block and grows identically to LVR between blocks. Both continuous and discrete versions of resetting LVR are well-defined in this manner.
    The resetting-LVR reward function, in either case, is static since it depends on \emph{only} the CEX price movements under $r$ since the parent block 
    (and \emph{not} on the price level when the parent block was created). In particular, for all $t$ and all $\Delta$, the resetting-LVR reward for the maximizing block at time $t$ with a parent mined at $t-\Delta$ has the same distribution – that of LVR starting at price $P$ after time $\Delta$ has passed.
\end{example}

We claim that resetting LVR is a reasonable local approximation to LVR over a small time frame. During a short time interval, price movements are bounded, and so is the effect of changes in $P$ on instantaneous LVR.\footnote{We intentionally state these claims informally since the goal of these examples is to illustrate the applicability of the properties we introduce in \Cref{subsec:properties}. More formal versions are possible but would require a deeper dive into the particular math behind LVR, which is beyond the scope of this paper.}
To summarize, per-block price alignment implies view-independence of LVR as demonstrated in \Cref{rem:lvr-vi}. LVR is not static (\Cref{rem:non-local-lvr}) because it depends on the price level of the CEX as of the parent timestamp. Resetting LVR (\Cref{rem:local-lvr-static}) differs because the price resets each block, removing the dependence on the parent timestamp (with the only remaining dependence being on \textit{time since} the parent block), making it static. 

\section{Extended model details}\label{app:model-details}

\subsection{Nakamoto Consensus Game algorithm}\label{app:algo}
\begin{algorithm}
\caption{View evolution under the Nakamoto Consensus Game}
\begin{algorithmic}[1]
    \STATE Draw independent random variables: $\vec{m}, \vec{t}, r$.
    \STATE Set $V^m_0=V_0$ for all $m\in M$, where $V_0:=\{B_0\}$ and $B_0$ is genesis.
    \STATE Set $t=0, j=1$ (let $t_j$, $m_j$ denote the time and miner of the $j^{th}$ block).
    \WHILE{game continues}
        \STATE Set $m = \nxtb{t,r}$ as the next miner scheduled to broadcast.
        \STATE Set $t' = \nxt{m,t,V_t^m,r}$ as the next broadcast time.
        \IF{$t' \leq t_j$}
            \STATE Update all views to reflect $m$'s newly broadcast blocks.
            \STATE Set $t \leftarrow t'$.
        \ELSE
            \STATE Examine the strategy of $m_j$ with inputs:
            \STATE \quad $t_j$, $V_t^{m_j}$, and $R^{m_j}(t_j, V_{t_j}^{m_j}, B, r, B')$ 
            \STATE \quad for all $B \in V_{t_j}^{m_j}$ and all $B' \in \mathcal{B}^{m_j}(t, V_{t_j}^{m_j}, B, r)$\\
            \STATE \quad to determine the parent and contents of $m_j$'s new block.
            \STATE Update $V_{t_j}^{m_j}$ to include this block.
            \STATE Set $t \leftarrow t_j$.
            \STATE Increment $j \leftarrow j + 1$.
        \ENDIF
    \ENDWHILE
\end{algorithmic}
\label{alg:ncg-updates}
\end{algorithm}

\paragraph*{Ensuring uniqueness of the longest chain}
We modify the NCG defined in \cite{bahrani2024undetectable} to account for more general reward functions.
Each miner $m$ collects the sum of rewards claimed in its blocks on the eventual longest chain and has utility proportional to the amount of reward it collects per unit of time. Formally, a \emph{longest chain} at time $t$ is any block in $V_t$ of greatest height. If the longest chain at time $t$ is unique, we denote by $\text{REWARD}_m^t$ the sum of rewards claimed by blocks mined by $m$ in the longest chain. If the longest chain at time $t$ is not unique, we let $t'<t$ denote the most recent time when the longest chain at $t'$ is unique, and define $\text{REWARD}_m^t:=\text{REWARD}_m^{t'}$. Recall that the longest chain at time 0 is unique by assumption, so this is always well-defined. The utility of miner $m$ is $\lim\inf_{t\rightarrow\infty} R_t^m/t.$ 

\subsection{Further model details}\label{app:further-model}
\paragraph*{Comparison to prior work.}
The majority of previous literature on selfish mining \cite{eyal2013majority,sapirshtein2017optimal,nayak2016stubborn} considers block rewards as the only source of revenue for miners and thus implicitly models difficulty adjustment by defining the miner utility in terms of the percentage of blocks on the longest chain. Maximizing this objective is equivalent to maximizing the per-unit-time profit because difficulty adjustment ensures the total amount of block rewards issued per unit of time is fixed. \cite{carlsten2016instability} considers transaction fees as the sole source of miner revenue. Similarly to block rewards, these transaction fees accrue at a fixed rate per unit of time and are assumed to remain claimable any time after arrival. In both cases, the sum of the rewards collected by honest and attacker blocks per unit of time remains constant.

In practice, many sources of miner revenue may vary over time. For example, \Cref{rem:non-local-lvr} (LVR) describes one source of revenue that grows super-linearly in inter-block time, and \Cref{rem:patient-not-vi} (transaction fees with finite blocks) describes another source that grows sub-linearly.
This means that, even if difficulty adjustment guarantees a fixed \emph{average} block time, the total reward collected by honest and attacker blocks depends on the specific inter-block times. 
Therefore, a profit-maximizing attacker would not simply maximize the \emph{percentage} of rewards they collect, but rather the total amount. Our model captures these reward sources; we define miner utilities explicitly as their expected reward per unit of time. Furthermore, our profitability analyses are more nuanced as they must directly consider the specific inter-block times, which requires explicitly modeling the orphan rate and its implied block production rate.

\paragraph*{Independence of randomness sources.}
There are three sources of randomness in the NCG ($\vec{t},\vec{m},r$), drawn independently prior to the game. It is not obvious that we can assume independence without loss of generality since the block production rate is a function of the orphan rate $\lambda$, which is determined by strategies of miners, which in turn depend on $r$. Crucially, the assumption that the orphan rate is stable for the entire duration of the game eliminates this dependence.
Note that such independence of miner strategies and time might not be present in other consensus games. In Proof-of-Stake, for example, the leader is elected for a fixed duration and may choose to delay their block publication intentionally to capture extra rewards – see \Cref{ex:timinggames} for a discussion of these ``timing games.''

\section{Omitted Proofs}\label{app:omitted}

\subsection{Proof of \Cref{lem:lambda}}\label{pr:lambda}
\paragraph*{\textbf{Lemma statement: }}
    Let $\lambda$ measure the probability that a block produced in the Markov Chain is orphaned. Then,
    \begin{align*}
        \lambda = p_1 (1-\alpha) \left(1+\frac{\alpha}{1-2\alpha}\right).
    \end{align*}
\begin{proof}
    Every time the Markov Chain enters \texttt{State 0'}, a block is orphaned. Additionally, for all \texttt{State i} where $i \geq 2$, a block is orphaned with probability $1-a$ as any honest block will be abandoned when \texttt{State 0''} is reached. Thus,
	\begin{align*}
		\lambda &= p_{0'} + (1-a)\sum_{i=2}^\infty p_i \\ 
        &= p_1 (1-\alpha)\left(1 + \sum_{i=2}^\infty \left(\frac{\alpha}{1-\alpha}\right)^{i-1}\right) \\
        &= p_1 (1-\alpha) \left(1+\frac{\alpha}{1-2\alpha}\right).
	\end{align*}
\end{proof}

\section{Extended derivations}\label{app:derivations}

\subsection{Stationary distribution calculation}\label{app:stationary}
Using this Markov Chain, we calculate the stationary distribution using the same technique conducted in Appendix~E.2 in \cite{carlsten2016instability}.
\begin{definition}[Stationary distribution, $p_i$]\label{def:stationary}
    Let $p_i$ denote the stationary distribution of the Markov Chain for \texttt{State i}.
    We start by calculating all probabilities relative to $p_1$,
    \begin{alignat*}{2}
    	&p_0 &&= \frac{p_1}{\alpha \int_0^\infty \frac{e^{-t/(1-\lambda)}}{(1-\lambda)}F_t(\beta)dt} \\ 
    	&p_{0'} &&= p_1(1-\alpha) \\
    	&p_{0''} &&= p_1 \alpha \\
    	&p_i &&= p_1\left(\frac{\alpha}{1-\alpha}\right)^{i-1}, \; \text{for } i \geq 1.
    \end{alignat*}
    Using the simplex constraint, we solve for $p_1$ explicitly,
    \begin{align*}
    	p_0 &+ p_{0'} + p_{0''} + \sum_{i=1}^\infty p_i =1 \\
        &\implies  p_1 = \left(\frac{1}{\alpha \int_0^\infty \frac{e^{-t/(1-\lambda)}}{(1-\lambda)}F_t(\beta)dt} + 1 + \frac{1-\alpha}{1-2\alpha}\right)^{-1}.
    \end{align*}
\end{definition}

\subsection{Deriving $f_i$ }\label{app:general-f0-f1}

Building on \Cref{ex:state3paths} and \Cref{def:attack-paths}, we generalize for \texttt{State i} where $i \geq 2$.

\begin{lemma}[$f_{i\geq 2}$]\label{thm:attackgeq2}
	For all states $i\geq 2$, the expected attacker rewards collected in \texttt{State i},
	\begin{align*}
		f_i &= \sum_{j=0}^{i-1} \left[\alpha (1-\alpha)^j \int_{0}^\infty \frac{t^j e^{-t/(1-\lambda)}}{j!(1-\lambda)^{j+1}}\mathbb{E}_{r}[R(t)]dt\right]
	\end{align*}
\end{lemma}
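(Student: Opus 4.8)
The plan is to directly generalize the worked $f_3$ computation, leaning on the path enumeration already established in \Cref{def:attack-paths}. The starting point is the observation that upon entering \texttt{State i} (having just mined the $i$-th block of the private chain), the static reward function begins accruing value on the attacker's next private block, and by staticness (\Cref{def:static}) this value is exactly $R(t)$ with $t$ the elapsed time since that parent block. The attacker canonicalizes this reward precisely along the $i$ paths $\texttt{(H}^j\texttt{)A}$ for $j=0,\ldots,i-1$, so the whole lemma reduces to computing the expected reward along each path and summing.

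First I would fix $j\in\{0,\ldots,i-1\}$ and analyze the path with $j$ honest blocks followed by one attacker block. I would argue its contribution factors as a product of three independent pieces, exploiting that $\vec{m}$, $\vec{t}$, and $r$ are drawn independently in the NCG: (i) the probability that the winning-miner sequence realizes this path, namely $(1-\alpha)^j\alpha$, coming from $\vec{m}$; (ii) the density of the total elapsed time $t$, which is a sum of $j+1$ i.i.d.\ rate-$1/(1-\lambda)$ exponential inter-block times drawn from $\vec{t}$, hence the Erlang density $\frac{t^j e^{-t/(1-\lambda)}}{j!(1-\lambda)^{j+1}}$; and (iii) the expected value $\mathbb{E}_r[R(t)]$ of the static reward accrued over time $t$, taken over the remaining randomness $r$. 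Multiplying these and integrating over $t$ gives the $j$-th summand, and summing over $j=0,\ldots,i-1$ yields the claimed expression for $f_i$.

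The main obstacle is the enumeration step rather than the integral: one must justify that $\texttt{(H}^j\texttt{)A}$ for $0\le j\le i-1$ are exactly the event sequences in which the attacker's next private block both claims the \texttt{State i} rewards and becomes canonical. The key point is that upon entering \texttt{State i} the attacker holds a lead of $i$, so the honest chain can produce at most $i-1$ blocks before the attacker must extend the private chain for its new block to remain ahead; once the honest count reaches $i$ the lead vanishes, the attacker is forced to publish at \texttt{State 0''}, and the accrued reward is no longer captured by this private block (it is accounted for elsewhere, consistent with the anti-double-counting weighting in \Cref{def:fullreward}). Secondary care is needed to confirm that the inter-block times are genuinely i.i.d.\ exponential at the difficulty-adjusted rate $1/(1-\lambda)$ from \Cref{lem:lambda}, so that their sum is Erlang, and that the independence of $\vec{m},\vec{t},r$ legitimately licenses the factorization inside each summand.
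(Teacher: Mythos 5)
Your proposal is correct and takes essentially the same route as the paper: enumerate the $i$ attacker paths \texttt{H}$^j$\texttt{A} for $j=0,\ldots,i-1$, weight each by the miner-sequence probability $\alpha(1-\alpha)^j$, integrate the Erlang$(j+1,1/(1-\lambda))$ density of the path duration against $\mathbb{E}_r[R(t)]$, and sum. The only nit is your account of the publication timing—the attacker is forced to publish once the honest chain is within one block of the private chain (i.e., after $i-1$ honest blocks), so the final \texttt{A} in \texttt{H}$^{i-1}$\texttt{A} is mined after publication rather than as a private extension—but this does not change the path enumeration or the resulting formula.
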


\begin{proof}
	In the set of \texttt{State i} attacker paths, there is exactly one path for each length $j=1, 2, \ldots i$, and the paths are $j-1$ copies of \texttt{H} before a single \texttt{A} (\texttt{A, HA, HHA, ...}). Each path occurs with probability $\alpha(1-\alpha)^j$, and the distribution of time for the length of the path is Erlang$(j,1/(1-\lambda))$. We integrate over the density of these path timings and multiply by the expectation of $R(t)$ over all remaining randomness, $r$. 
\end{proof}

\paragraph*{Calculating $f_0$.}
\texttt{State 0} requires deriving the expected reward for an attacker, given they may or may not hide a block they find.
From \texttt{State 0}, rewards are canonicalized by an attacker block in three ways:
\begin{description}[leftmargin=!,labelwidth=2.5cm]
	\item[Case i] the block has more rewards than $\beta$ (the attacker publishes),
	\item[Case ii] the block has less rewards than $\beta$ (the attacker hides) \textit{and} the attacker finds the next block,
	\item[Case iii] the block has less rewards than $\beta$ \textit{and} honest finds the next block (transitioning to \texttt{State 0'}) \textit{and} the attacker fork wins the race.
\end{description}
We treat each case individually. For \textbf{Case i}, the attacker publishes the block and thus realizes those rewards immediately on the canonical chain.
\begin{align*}
	f_{0,(i)} &= \underbrace{\alpha}_{\shortstack{\scriptsize attacker block}}\int_{t=0}^\infty 
	\underbrace{\frac{e^{-t/(1-\lambda)}}{(1-\lambda)}}_{\shortstack{\scriptsize density of time}}
	\underbrace{\int_{x=\beta}^\infty x F'_t(x)  dx}_{\shortstack{\scriptsize expected reward $\geq \beta$\\ \scriptsize at time $t$}} dt.
\end{align*}
This is exactly the expected attacker value of the state transition \texttt{State 0} $\rightarrow$ \texttt{State 0}. The inner integral bounds are $\beta \to \infty$ to capture the expected rewards given they are greater than $\beta$. For \textbf{Case ii}, the attacker block mined in \texttt{State 0} will become canonicalized for certain once they mine the second block. Thus, their rewards are realized when they transition to \texttt{State 2}.
\begin{align*}
	f_{0,(ii)} &= \underbrace{\alpha^2}_{\shortstack{\scriptsize two attacker \\ \scriptsize blocks}}
	\int_0^\infty 
	\underbrace{\frac{e^{-t/(1-\lambda)}}{(1-\lambda)}}_{\shortstack{\scriptsize density of time}}
	\underbrace{\int_{x=0}^\beta x F'_t(x)  dx}_{\shortstack{\scriptsize expected reward $< \beta$\\ \scriptsize at time $t$}} dt.
\end{align*}
This is the contribution to the attacker's expected rewards of the state transition \texttt{State 0} $\rightarrow$ \texttt{State 1} given a second attacker block in a row. Here, the integral is evaluated from $0 \to \beta$ to account for the expected value of rewards conditioned on the block remaining unpublished. For \textbf{Case iii}, the attacker block mined in \texttt{State 0} will become canonicalized if they win the race out of \texttt{State 0'} (e.g., either themselves or the $\gamma (1-\alpha)$ portion of the honest network that contributes to their chain mining the subsequent block and breaking the tie). Thus, their rewards are realized when they transition back to \texttt{State 0}.
\begin{align*}
f_{0,(iii)} 
&= \underbrace{\alpha}_{\shortstack{\scriptsize attacker block \\ \scriptsize in \texttt{State 0}}}
   \underbrace{(1-\alpha)}_{\shortstack{\scriptsize honest block \\ \scriptsize in \texttt{State 1}}}
   \underbrace{(\alpha + \gamma(1-\alpha))}_{\shortstack{\scriptsize attacker fork \\ \scriptsize wins tie-break}} \cdot \int_0^\infty 
   \underbrace{\frac{e^{-t/(1-\lambda)}}{(1-\lambda)}}_{\shortstack{\scriptsize density of time}}
   \underbrace{\int_{x=0}^\beta x F'_t(x)\,dx}_{\shortstack{\scriptsize expected reward $< \beta$\\ \scriptsize at time $t$}}\,dt.
\end{align*}
Thus $f_0 = f_{0,(i)}+f_{0,(ii)}+f_{0,(iii)}$.

\paragraph*{Calculating $f_1$.}
For \texttt{State 1}, rewards arriving in that state will be canonicalized by the attacker under two paths: (i) the attacker finds the next block (transitioning into \texttt{State 2}) or (ii) the honest party finds the next block (transitioning into \texttt{State 0'}) \textit{and} the attacker finds the subsequent. This is the same as the for the \texttt{State 2} attacker paths \texttt{A,HA}, so we use \Cref{thm:attackgeq2} with $i=2$,
\begin{align*}
	f_1 =& \alpha \int_0^{\infty} \frac{e^{-t/(1-\lambda)}}{(1-\lambda)} \mathbb{E}_{r}[R(t)]dt + \alpha(1-\alpha) \int_{0}^\infty \frac{te^{-t/(1-\lambda)}}{(1-\lambda)^2}  \mathbb{E}_{r}[R(t)] dt.
\end{align*}
Note that for \texttt{States 0', 0''}, the rewards accrued are already accounted for in $f_1$ and $f_2$ calculations, respectively. With $\lambda$ derived in \Cref{lem:lambda}, the stationary distribution calculated in \Cref{def:stationary} (the $p_i$ values), and the per-state attacker expected rewards calculated in \Cref{def:fis} (the $f_i$ values), we can calculate the full rewards for the attacker following the $\beta-$cutoff strategy.

\subsection{Transition probabilities}\label{app:transition-probs-instance}

We instantiate the general Markov Chain (\Cref{def:markovchain}) with our reward function $\hat{R}$. Recall that the selfish miner hides their block in \texttt{State 0} only if the realized rewards of the block are less than $\beta$. We calculate the CDF of the reward function (\Cref{def:cdf}), which depends on the relative size of $\beta$ and $E+C$.\footnote{We ignore the case where $C > \beta$ because that implies the attacker never hides their block and mines honestly.} If $\beta \leq C+E$, then the Bernoulli trial succeeding means $R(t) = t + C + E > \beta, \forall t$. Thus, for a given amount of time since parent, $t$, the total reward is less than $\beta$ only if the trial fails,
\begin{align*}
	F_t(\beta)_{\beta \leq E+C} = 
	\begin{cases}
		1-p & \text{if } t \leq \beta-C \\
		0 & \text{otherwise} 
	\end{cases}
\end{align*} 
If $\beta > E+C$, the total rewards may be less than $\beta$ even if the trial succeeds. Thus, the time component of the rewards must be sufficiently large for the total reward to exceed $\beta$. First, if $t < \beta-C-E$, the total rewards are certainly less than $\beta.$ If $t \in [\beta-C-E, \beta-C]$, the total reward is greater than $\beta$ only if the Bernoulli trial succeeds. Lastly, if $t \geq \beta - C$, the rewards exceed $\beta$ regardless of the trial outcome. Thus, 
\begin{align*}
	F_t(\beta)_{\beta > E+C} = 
	\begin{cases}
		1 & \text{if } t < \beta - C - E \\
		1-p & \text{if } t \in [\beta-C-E, \beta-C] \\
		0 & \text{otherwise}
	\end{cases}
\end{align*}

\begin{figure}
    \centering
    \includegraphics[width=0.85\linewidth]{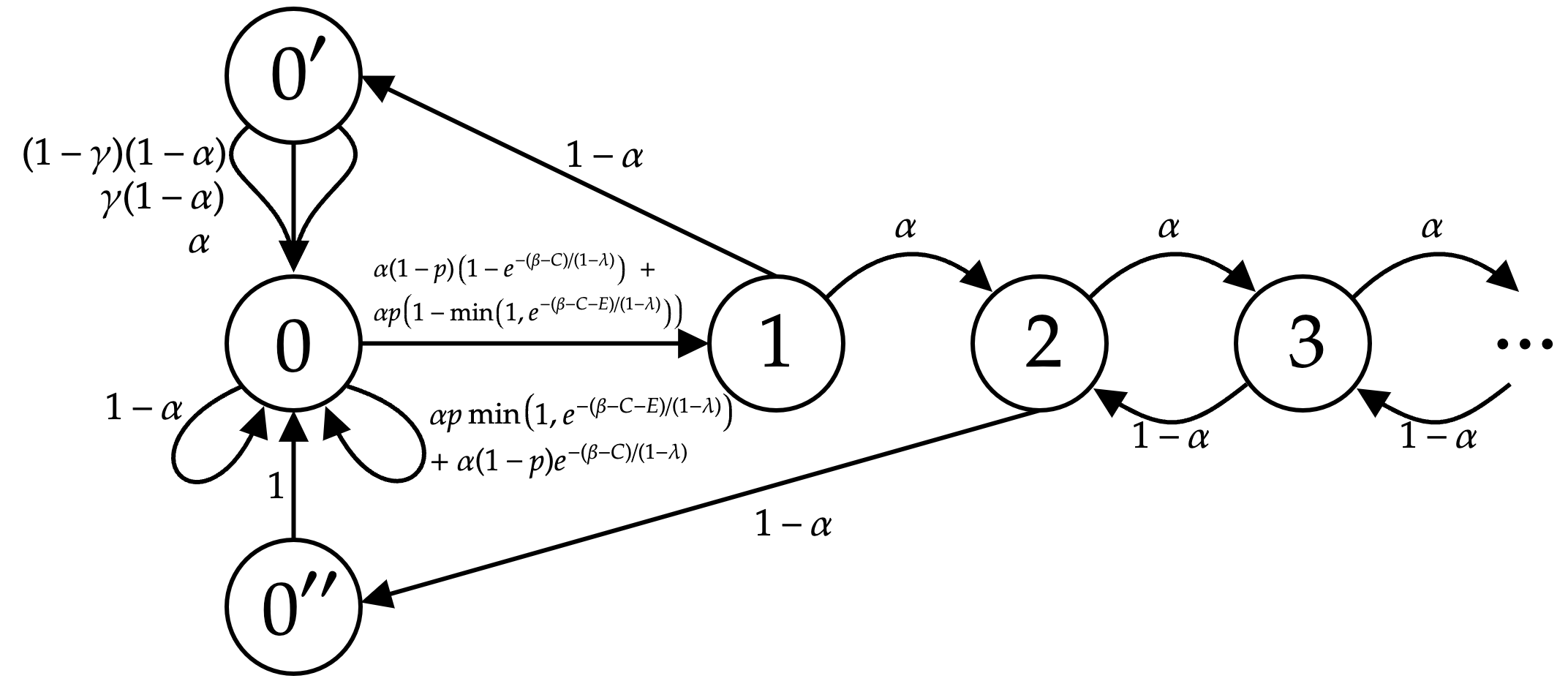}
    \caption{A Markov Chain for the $\beta-$cutoff strategy under the combination of (i) deterministic linear-in-time transaction fees, (ii) block rewards of magnitude $C$, and (iii) an extra Bernoulli reward of magnitude $E$. The $\min$ function is necessary to capture both the cases of $\beta \leq C + E$ and $\beta > C+E$ }
    \label{fig:mc-instance}
\end{figure}

Using these CDFs, we start by calculating the probability of the attacker transitioning to \texttt{State 1} (as in \Cref{eq:zerotoone}). 
An attacker will hide a block at time $t$ if the total rewards of the block are less than $\beta$. We calculate this probability by integrating over all possible times.
\begin{align*}
\Pr[\texttt{State 0} \rightarrow \texttt{State 1}] 
&= \alpha \Big[
   \underbrace{(1-p)\left(1-e^{-(\beta - C)/(1-\lambda)}\right)}_{\shortstack{\scriptsize trial fails and\\ \scriptsize $t < \beta - C$}} + \underbrace{p\left(1-\min\left(1,e^{-(\beta - C - E)/(1-\lambda)}\right)\right)}_{\shortstack{\scriptsize trial succeeds and\\ \scriptsize $t < \beta - C - E$}}
\Big].
\end{align*}

Next, we calculate the other transition out of \texttt{State 0}, where the attacker publishes their block because the reward exceeds $\beta$ (as in \Cref{eq:zerotozero}). An attacker will publish a block at time $t$ since the parent block if the total rewards of the block are greater than $\beta$. We calculate this probability by integrating over all possible times.
\begin{align*}
	\Pr[&\texttt{State 0} \rightarrow \texttt{State 0} \land \text{attacker block}] \\
	&= \alpha \Big[
	\underbrace{(1-p)e^{-(\beta -C)/(1-\lambda)}}_{\shortstack{\scriptsize trial fails and \\\scriptsize $t \geq \beta - C$}} + 
	\underbrace{p \min\left(1,e^{-(\beta-C-E)/(1-\lambda)}\right)}_{\shortstack{\scriptsize trial succeeds and \\\scriptsize $t \geq \beta - C - E$}}\Big].
\end{align*}

With these state transitions calculated, we present the complete Markov chain for the $\beta-$cutoff strategy in \Cref{fig:mc-instance}. 

\paragraph*{Stationary distribution}
Using \Cref{def:stationary}
\begin{align*}
&p_0 
= \frac{p_1}{
  \alpha (1-p)\bigl(1-e^{-(\beta-C)/(1-\lambda)}\bigr) 
  + \alpha p \bigl(1 - \min\bigl(1, e^{-(\beta-C-E)/(1-\lambda)}\bigr)\bigr)
} 
\\
&p_1 
= \biggl[
   \frac{1}{
     \alpha (1-p)\bigl(1-e^{-(\beta-C)/(1-\lambda)}\bigr)
     + \alpha p \bigl(1 - \min\bigl(1, e^{-(\beta-C-E)/(1-\lambda)}\bigr)\bigr)
   }
   + 1
   + \frac{1-\alpha}{1-2\alpha}
\biggr]^{-1}.
\end{align*}

\subsection{Expected attacker rewards}\label{app:expected-rewards-instance}
To continue the attacker reward calculation, we need to calculate the per-state expected attacker reward (\Cref{def:fis}). To calculate these values, we need to find the expected value of the reward function, $\hat{R}$ (\Cref{eq:fullrews}), depending on the time until the next block. Again, we use the \texttt{State 3} example to illustrate.
\begin{example}[\texttt{State 3} attacker paths, $\hat{R}$]
Recall that we have paths, \texttt{A, HA, HHA} respectively. Each block the attacker creates earns the constant block reward, $C$, and a Bernoulli reward of magnitude, $p \cdot E$. 
\begin{align*}
    f_3 =& \underbrace{(C + p\cdot E)\cdot(\alpha + (1-\alpha)\alpha + (1-\alpha)^2\alpha)}_{\shortstack{\scriptsize block and Bernoulli rewards}} + \underbrace{(1-\lambda)\cdot( \alpha + 2(1-\alpha)\alpha + 3 (1-\alpha)^2\alpha)}_{\shortstack{\scriptsize linear-in-time transaction fees}}.
\end{align*}
\end{example}
For the derivation according to \Cref{thm:attackgeq2}, see \Cref{app:f3derived}. This example prompts the instantiated versions of $f_{i}$. The expected attacker reward in \texttt{State i}, where $i \geq 2$, is
\begin{align*}
    f_{i\geq 2} = \underbrace{(C + p\cdot E)\cdot \sum_{j=0}^{i-1}\left[\alpha (1-\alpha)^i\right]}_{\text{block and bernoulli rewards}} 
    + \underbrace{(1-\lambda)\sum_{j=0}^{i-1}\left[ \alpha (1-\alpha)^{j} (j+1)\right]}_{\text{linear-in-time transaction fees}}
\end{align*}

This follows from enumerating the $i$ paths out of \texttt{State i} and calculating the probability of each occurring multiplied by the expected length of that path to find the value of the reward function. Note that we can write $\mathbb{E}_r[R(t)] = C + p\cdot E + t,$ because the expectation over the randomness of the Bernoulli reward is the expected value of the trial and the expectation over the time reward is linear as $t$.

\paragraph*{Calculating $f_0$.}
As in \Cref{subsec:perstateattacker}, we enumerate the three cases for \texttt{State 0}. We first define the PDF of $\hat{R}(t)$, 
\begin{align*}
	F'_t(x) = (1-p)\cdot\frac{e^{-(x-C)/(1-\lambda)}}{(1-\lambda)} + p\cdot \frac{e^{-(x-C-E)/(1-\lambda)}}{(1-\lambda)}.
\end{align*}
At time $t$, the instantaneous probability that the reward function $\hat{R}(t) = x$ depends on the outcome of the Bernoulli trial. If the trial fails, then the total reward is $\hat{R}= t+C$; thus $\Pr[t+C] = x$ is simply $\Pr[t] = x-C$, which for an exponential is $\frac{e^{-(x-C)/(1-\lambda)}}{(1-\lambda)}$. If the trial succeeds, by the same logic, we calculate $\Pr[t]=x-C-E$ as $\frac{e^{-(x-C-E)/(1-\lambda)}}{(1-\lambda)}.$ For \textbf{Case i}, the attacker publishes the block immediately; those rewards become theirs on the canonical chain.
\begin{align*}
	f_{0,(i)}
	&=\alpha \int_0^\infty\frac{e^{- t/(1-\lambda)}}{(1-\lambda)} \cdot\int_\beta^\infty x\left[(1-p)\cdot\frac{e^{-(x-C)/(1-\lambda)}}{(1-\lambda)} + p\cdot\frac{e^{-(x-C-E)/(1-\lambda)}}{(1-\lambda)}\right] dxdt.
\end{align*}
To evaluate the integral see \Cref{app:f0iderived}.
For \textbf{Case ii}, the attacker block mined in \texttt{State 0} will become canonicalized for certain once they mine the second block. Thus, they realize these rewards when transitioning to \texttt{State 2}.
\begin{align*}
	f_{0,(ii)} 
	&=\alpha^2 \int_0^\infty \frac{e^{-t/(1-\lambda)}}{(1-\lambda)} \cdot\int_0^\beta x\left[(1-p)\cdot\frac{e^{-(x-C)/(1-\lambda)} }{(1-\lambda)}+ p\cdot \frac{e^{-(x-C-E)/(1-\lambda)}}{(1-\lambda)} \right] dxdt.
\end{align*}
To evaluate the integral, see \Cref{app:f0iiderived}. For \textbf{Case iii}, the attacker block mined in \texttt{State 0} will become canonicalized only if they win the race out of \texttt{State 0'} (i.e., either by themselves or the $\gamma (1-\alpha)$ portion of the honest network that contributes to their chain mining the subsequent block and breaking the tie). Thus, they realize these rewards upon transitioning to \texttt{State 0}.
\begin{align*}
    f_{0,(iii)} 
	&=\alpha(1-\alpha)(\alpha+\gamma(1-\alpha)) \cdot \int_0^\infty \frac{e^{-t/(1-\lambda)}}{(1-\lambda)} \\
    &\cdot \int_0^\beta x\left[(1-p)\cdot\frac{e^{-(x-C)/(1-\lambda)} }{(1-\lambda)}+ p\cdot \frac{e^{-(x-C-E)/(1-\lambda)}}{(1-\lambda)} \right] dxdt.
\end{align*}
For the evaluation of the integral see \Cref{app:f0iiiderived}.
Thus $f_0 = f_{0,(i)} + f_{0,(ii)}+f_{0,(iii)}$.

\paragraph*{Calculating $f_1$.}
To conclude, we need $f_1$. Rewards arriving in that \stateone will be canonicalized by the attacker under two paths: (i) the attacker finds the next block (transitioning into \texttt{State 2}) or (ii) the honest party finds the next block (transitioning into \texttt{State 0'}) \textit{and} the attacker finds the subsequent. This is \Cref{thm:attackgeq2} with $i=2$, 
\begin{align*}
	f_1 =  (C+p\cdot E)\cdot (\alpha + \alpha(1-\alpha)) + (1-\lambda)\cdot (\alpha + 2\alpha(1-\alpha)).
\end{align*}
As before, the rewards accruing in \texttt{States 0' \& 0''} are already accounted for in the reward calculations from \texttt{States 1 \& 2} respectively. We can now explicitly calculate the attacker reward (\Cref{def:fullreward}).
The full attacker reward under $\hat{R}$ is, 
\begin{align*}
	\text{ATTACKER REWARD} =& p_0f_0 + p_1f_1 +p_1\cdot \bigg(
    \underbrace{(C+p\cdot E)\cdot\frac{2\alpha^2(1-\alpha)}{1-2\alpha}}_{\text{bernoulli and block rewards}}+\underbrace{(1-\lambda)\cdot\frac{\alpha^2 (3-2\alpha)}{1-2\alpha}}_{\text{linear-in-time transaction fees}}
    \bigg)
\end{align*}
For the derivation, see \Cref{app:fullcombined}.

\subsection{Deriving $f_3$ under the combined rewards, $\hat{R}$}\label{app:f3derived}
Implementing \Cref{thm:attackgeq2} with $\hat{R}$ (\Cref{eq:fullrews})
\begin{align*}
    f_3 &= \sum_{j=0}^{2} \left[\alpha (1-\alpha)^j \int_{0}^\infty \frac{t^j e^{-t/(1-\lambda)}}{j!(1-\lambda)^{j+1}}\mathbb{E}_{r}[R(t)]dt\right] \\
    &= \sum_{j=0}^{2} \left[ \alpha (1-\alpha)^j \int_{0}^\infty \frac{t^j e^{-t/(1-\lambda)}}{j!(1-\lambda)^{j+1}} (C + p\cdot E + t)dt \right]\\ 
    &= \sum_{j=0}^{2} \left[ \alpha (1-\alpha)^j (C + p\cdot E) \right] +  \sum_{j=0}^{2} \left[\alpha (1-\alpha)^j \int_{0}^\infty \frac{t^{j+1} e^{-t/(1-\lambda)}}{j!(1-\lambda)^{j+1}} dt\right] \\ 
    &= \sum_{j=0}^{2} \left[\alpha (1-\alpha)^j (C + p\cdot E)\right] +  (1-\lambda)\sum_{j=0}^{2} \left[\alpha (1-\alpha)^j \cdot (j+1)\right]. 
\end{align*}
\vfill


\subsection{Deriving $f_{0,(i)}$ under the combined rewards, $\hat{R}$}\label{app:f0iderived}
\begin{align*}
	f_{0,(i)} =& \alpha \int_0^\infty \frac{e^{-t/(1-\lambda)}}{(1-\lambda)} \int_\beta^\infty x F'_t(x) dx dt \\
	=&\alpha \int_0^\infty \frac{e^{- t/(1-\lambda)}}{(1-\lambda)} 
    \int_\beta^\infty x\bigg[(1-p)\cdot \frac{e^{-(x-C)/(1-\lambda)}}{(1-\lambda)} \\
    &+ p\cdot \frac{e^{-(x-C-E)/(1-\lambda)}}{(1-\lambda)} \bigg] dxdt\\ 
	=&\alpha\Big[\underbrace{C \cdot \left(p\min\left(1,e^{-(\beta-C-E)/(1-\lambda)}\right)+(1-p)e^{-(\beta-C)/(1-\lambda)}\right)}_{\text{block reward}}\\
	&+ \underbrace{E \cdot \left(p \min\left(1,e^{-(\beta-C-E)/(1-\lambda)}\right)\right)}_{\text{bernoulli reward}}\\
	&+ \underbrace{ p \left(1-\lambda+\max(0,\beta- C-E)\right)\min\left(1,e^{-(\beta-C-E)/(1-\lambda)}\right)}_{\shortstack{\scriptsize expected time $\geq \beta-C-E$ \\\scriptsize given trial succeeded}} \\
	& +
	\underbrace{(1-p)\left(1-\lambda+\beta-C\right)e^{-(\beta -C)/(1-\lambda)}}_{\shortstack{\scriptsize expected time $\geq \beta-C$ \\\scriptsize given trial failed}}\Big]
\end{align*}
\newpage

\subsection{Deriving $f_{0,(ii)}$ under the combined rewards, $\hat{R}$}\label{app:f0iiderived}
\begin{align*}
	f_{0,(ii)} =& \alpha^2 \int_0^\infty \frac{e^{-t/(1-\lambda)}}{(1-\lambda)} \int_0^\beta x F'_t(x) dx dt \\
	=&\alpha^2 \int_0^\infty \frac{e^{-t/(1-\lambda)}}{(1-\lambda)} 
    \int_0^\beta x\bigg[(1-p)\cdot\frac{e^{-(x-C)/(1-\lambda)}}{(1-\lambda)} \\
    &+ p\cdot \frac{e^{-(x-C-E)/(1-\lambda)}}{(1-\lambda)} \bigg] dxdt\\ 
	=&\alpha^2 \Big[ \underbrace{C \cdot \bigg( p\left(1-\min\left(1,e^{-(\beta-C-E)/(1-\lambda)}\right)\right)}_{\text{block reward}}\\
    &+\underbrace{(1-p)\left(1-e^{-(\beta-C)/(1-\lambda)}\right)\bigg)}_{\text{block reward}}\\
	&+ \underbrace{E \cdot \left(p \left(1-\min\left(1,e^{-(\beta-C-E)/(1-\lambda)}\right)\right)\right)}_{\text{bernoulli reward}}\\
	&+ p\big(1-\lambda-\left(1-\lambda+\max(0,\beta- C-E)\right)\\
    &\;\cdot \underbrace{\min\left(1,e^{-(\beta-C-E)/(1-\lambda)}\right)\big)}_{\shortstack{\scriptsize expected time $< \beta$ \\\scriptsize given trial succeeded}}  \\
	&+ \underbrace{(1-p)\left(1-\lambda-\left(1-\lambda+\beta-C\right)e^{-(\beta -C)/(1-\lambda)}\right)}_{\shortstack{\scriptsize expected time $< \beta$ \\\scriptsize given trial failed}}\Big)\Big]
\end{align*}
\vfill

\subsection{Deriving $f_{0,(iii)}$ under the combined rewards, $\hat{R}$}\label{app:f0iiiderived}
\begin{align*}
	f_{0,(iii)} =& (1-\alpha)(\alpha+\gamma(1-\alpha))\alpha \int_0^\infty \frac{e^{-t/(1-\lambda)}}{(1-\lambda)} \int_0^\beta x F'_t(x) dx dt \\
	=&(1-\alpha)(\alpha+\gamma(1-\alpha))\alpha \\
	&\cdot \int_0^\infty \frac{e^{-t/(1-\lambda)}}{(1-\lambda)} \int_0^\beta x\bigg[(1-p)\cdot \frac{e^{-(x-C)/(1-\lambda)}}{(1-\lambda)} \\ 
    &\,+ p\cdot \frac{e^{-(x-C-E)/(1-\lambda)}}{(1-\lambda)} \bigg] dxdt\\ 
	=&(1-\alpha)(\alpha+\gamma(1-\alpha)) 
	\\
	&\cdot \Big[ \underbrace{C \cdot \bigg( p\left(1-\min\left(1,e^{-(\beta-C-E)/(1-\lambda)}\right)\right)}_{\text{block reward}}\\
    &+\underbrace{(1-p)\left(1-e^{-(\beta-C)/(1-\lambda)}\right)\bigg)}_{\text{block reward}}\\
	&+ \underbrace{E \cdot \left(p \left(1-\min\left(1,e^{-(\beta-C-E)/(1-\lambda)}\right)\right)\right)}_{\text{bernoulli reward}}\\
	&+ p\bigg(1-\lambda-\left(1-\lambda+\max(0,\beta- C-E)\right) \\
    &\,\cdot\underbrace{\min\left(1,e^{-(\beta-C-E)/(1-\lambda)}\right)\bigg)}_{\shortstack{\scriptsize expected time $< \beta$ \\\scriptsize given trial succeeded}}  \\
	&+ \underbrace{(1-p)\left(1-\lambda-\left(1-\lambda+\beta-C\right)e^{-(\beta -C)/(1-\lambda)}\right)}_{\shortstack{\scriptsize expected time $< \beta$ \\\scriptsize given trial failed}}\Big)\Big]
\end{align*}

\subsection{Deriving full attacker reward under $\hat{R}$}\label{app:fullcombined}
Starting with calculating the $p_i$ (\Cref{def:stationary}) and $f_i$ (\Cref{def:fis}) values, we have
\begin{align*}
	p_{i-1} &= p_1 \left(\frac{\alpha}{1-\alpha}\right)^{i-2}, \; i\geq 2 \\ 
	f_i &= \underbrace{(C+p\cdot E) \cdot \alpha \sum_{j=0}^{i-1} (1-   \alpha)^{j}}_{\text{bernoulli and block rewards}} + 
               \underbrace{(1-\lambda) \cdot \alpha \sum_{j=0}^{i-1} (1-\alpha)^{j} \cdot (j+1)}_{\text{linear-in-time transaction fees}} \\ 
    &= \underbrace{(C+p\cdot E)\cdot \left(1-(1-\alpha)^i\right)}_{\text{bernoulli and block rewards}}
    +\underbrace{(1-\lambda)\cdot \frac{1-(1+i\alpha)(1-\alpha)^i}{\alpha}}_{\text{linear-in-time transaction fees}}
\end{align*}
From \Cref{def:fullreward}, we write
\begin{align*}
    \text{ATTACKER REWARD} &= p_0f_0 + p_1f_1 + \alpha\sum_{i=2}^\infty  p_{i-1}f_i \\
    &= p_0f_0 + p_1f_1 +p_1\cdot \bigg(
    \underbrace{(C+p\cdot E)\cdot\frac{2\alpha^2(1-\alpha)}{1-2\alpha}}_{\text{bernoulli and block rewards}}+\underbrace{(1-\lambda)\cdot\frac{\alpha^2 (3-2\alpha)}{1-2\alpha}}_{\text{linear-in-time transaction fees}}
    \bigg)
\end{align*}

\section{Extended figures}\label{app:extended-figures}

\subsection{Linear-in-time transaction fees, block rewards, and Bernoulli rewards.}
We now turn to numerical results based on the expected attacker reward for the full combined reward function $\hat{R}$, which includes a Bernoulli reward. We chose $p=0.25, E=4$ to ensure that the expected Bernoulli reward ($p\cdot E= 1)$ matches the block reward (which is set to 1) and the expected linear rewards (scaled by $1/(1-\lambda)$ because of difficulty adjustment). 

\Cref{fig:rew-comp} shows the full attacker reward (\Cref{def:fullreward}) under the $\hat{R}$ reward function (\Cref{eq:fullrews}) for various strategies. For each value of $\alpha$, the $\beta$ is selected to maximize the portion of rewards denoted in the parenthesis (for \texttt{Selfish}, $\beta \to \infty$ as always hiding maximizes the share of block rewards). We see that optimizing for the \texttt{Total} reward function (the sum of the three constituent parts) dominates the other strategies for all values of $\alpha$. The inset axes zoom in on the critical region to show the values of $\alpha$ at which each strategy outperforms \texttt{Honest}. Note that \texttt{Honest} is represented by $3\alpha$ because the expected value of the sum of the reward sources is $3$. 

While \Cref{fig:interpolation} shows the rewards of a strategy \textit{when only accounting for a specific subset of the rewards}, \Cref{fig:rew-comp} instead evaluates each \textit{against the total reward function $\hat{R}$.} Thus, we see that the strategy which optimizes for the full set of rewards dominates each other strategy. Interestingly, the optimal $\beta-$cutoff when considering \textit{only linear rewards} performs nearly as well as the optimal $\beta$ given the full reward function. This is an interesting result as it demonstrates the relative importance of the size of the component rewards. Specifically, the miner optimizing for linear rewards gets pretty close to the optimal $\beta$, while the miner optimizing for Bernoulli rewards is markedly worse off when evaluated against the full set of rewards. 
\begin{figure}[h]
    \centering
    \includegraphics[width=0.8\linewidth]{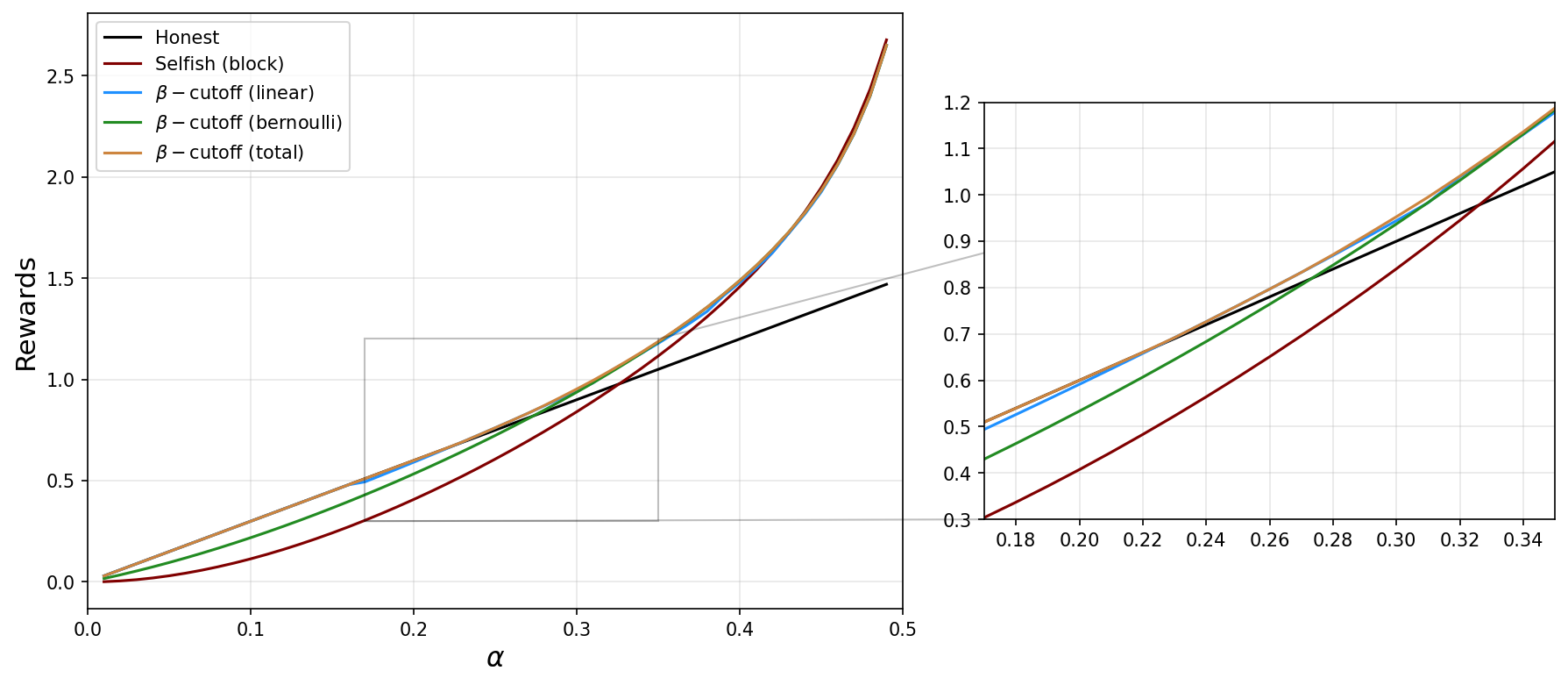}
    \caption{Comparing the full attacker reward (\Cref{def:fullreward}) under the $\hat{R}$ reward function (\Cref{eq:fullrews}) for various strategies with $p=0.25, E = 4, C=1, \gamma=0$. Each strategy chooses the $\beta$, which maximizes the reward portion described in parenthesis. We compare across a range of $\alpha$ values and see that optimizing for the total rewards dominates each of the other strategies, which focus on a single reward source.}
    \label{fig:rew-comp}
\end{figure}

\subsection{Rewards as a function of $\beta$ and simulation results.}

\Cref{fig:sims} plots the expected reward of each of the constituent rewards of $\hat{R}$ under various combinations of $\alpha,\beta$. Notably, the rewards may not be monotone in $\beta$, meaning the miner optimizing for the total rewards (or some specific subset) can choose the optimal $\beta$ that differs both from honest ($\beta=0$) and from selfish ($\beta \to \infty$). Each reward calculation for $\beta$-cutoff strategies in \Cref{fig:threshold-alphas,fig:interpolation,fig:bernoullis} chooses the optimal $\beta$ before evaluating the strategy against the benchmark. These simulated values help confirm that the path-counting technique presented in \Cref{sec:generalstatic} is correct. We also validate the methodology by performing a similar analysis for linear-in-time transaction fees and block rewards in \Cref{app:block-rews-only,app:lin-rews-only}, respectively.

\begin{figure}[h]
	\centering  
	\includegraphics[width=\linewidth]{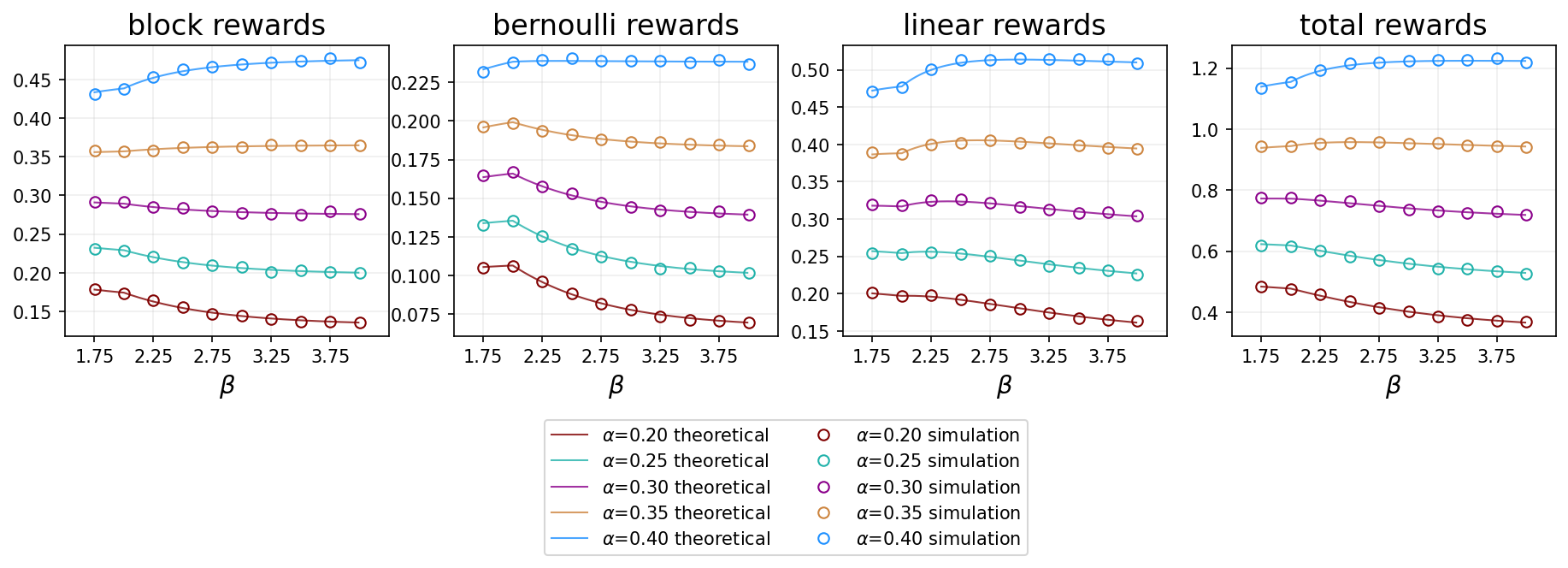}
	\caption{Theoretical (lines) and simulated values (dots) for miner rewards of the three component rewards constituting $\hat{R}$ as a function of $\alpha,\beta$. The fourth plot on the right is the sum of the others.}
	\label{fig:sims}
\end{figure}

\section{Worked example with only linear-in-time rewards}\label{app:lin-rews-only}
Consider only linear-in-time transaction fee rewards as in \cite{carlsten2016instability}, but with the $1/(1-\lambda)$ rate of block production. We confirm our results exactly analytically match the results of Appendix~E.2, despite using the path counting technique as opposed to their ``attacker probability of capturing each transaction'' method. With $R(t)=t$, the reward CDF (\Cref{def:cdf}) is simply,
\begin{align*}
    F_t(x) = 
    \begin{cases}
    1 & \text{if } t < x \\
    0 & \text{otherwise }
    \end{cases}
\end{align*}
Using the CDF, we derive the transition probabilities, which impact the stationary distribution (\Cref{def:stationary}).
\begin{align*}
    \Pr[\statezeronosp &\rightarrow \stateonenosp] \\
    &= \alpha \int_0^\infty 1/(1-\lambda)e^{-t/(1-\lambda)} F_t(\beta) dt\\   
    &= \alpha \int_0^\beta 1/(1-\lambda)e^{-t/(1-\lambda)} dt \\
    &= \alpha \left(1-e^{-\beta/(1-\lambda)}\right). \\ 
    \Pr[\statezeronosp &\rightarrow \statezeronosp \land \text{attacker block}] \\
    &= \alpha \int_0^\infty 1/(1-\lambda)e^{-t/(1-\lambda)} (1-F_t(\beta)) dt\\   
    &= \alpha \int_\beta^\infty 1/(1-\lambda)e^{-t/(1-\lambda)} dt \\
    &= \alpha \left(e^{-\beta/(1-\lambda)}\right).
\end{align*}
Now we need the the reward PDF (\Cref{def:cdf}),
\begin{align*}
    F'_t(x) = 1/(1-\lambda)e^{-x/(1-\lambda)}
\end{align*}
Using the PDF we calculate $f_0$ using the three cases.

\begin{figure}
    \centering
    \includegraphics[width=\linewidth]{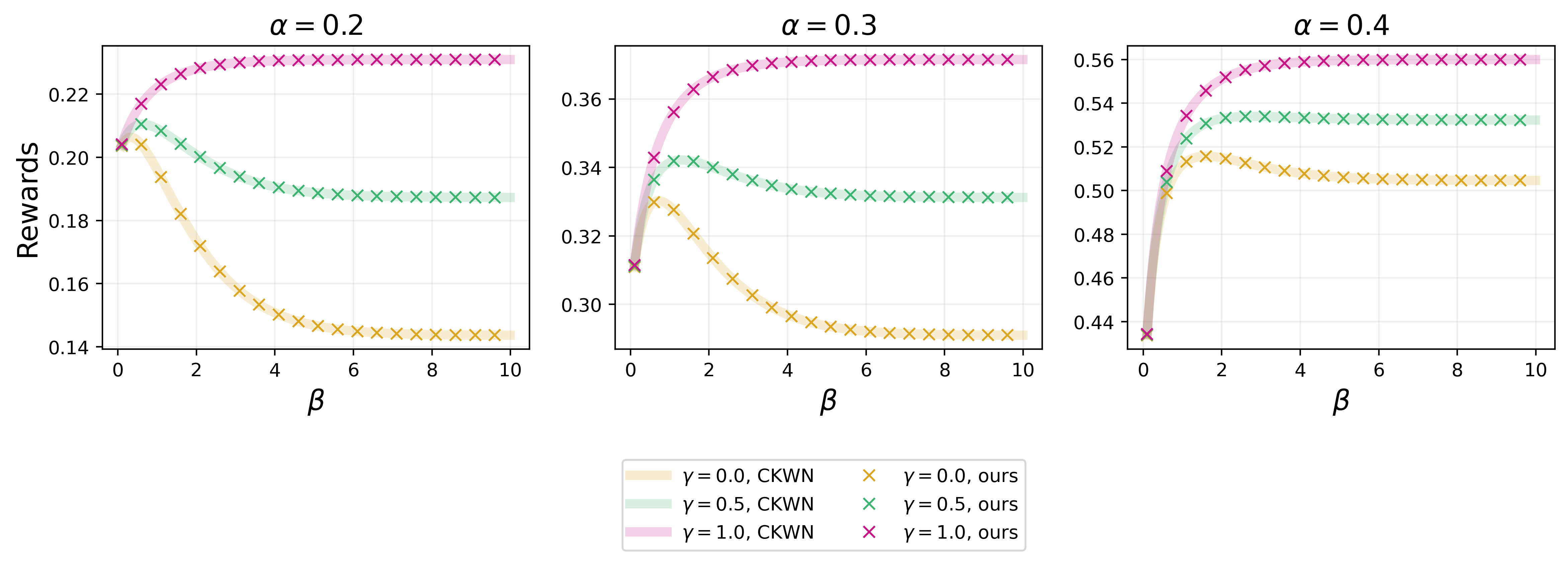}
    \caption{Comparing our analytic results (\Cref{eq:fullattacklinearonly} (denoted as \texttt{x}s labeled with \texttt{ours}) with the Appendix~E.2 formula from \cite{carlsten2016instability} (shown as lines labeled with \texttt{CKWN}). We show values for $\alpha=0.2,0.3,0.4$, $\lambda=0,1/2,1,$ and various values of $\beta$ – the values match to machine precision.} 
    \label{fig:linear-only}
\end{figure}

\noindent Case 1:
\begin{align*}
    f_{0,(i)} &= \alpha\int_\beta^\infty x/(1-\lambda)e^{-x/(1-\lambda)}dx \\
    &= \alpha e^{-\beta/(1-\lambda)}\left(\beta+1-\lambda\right).
\end{align*}
Case 2:
\begin{align*}
    f_{0,(ii)} &= \alpha^2 \int_0^\beta x/(1-\lambda)e^{-x/(1-\lambda)}dx \\ 
    &=\alpha^2 \left(1-\lambda - \left(\beta + 1-\lambda\right)e^{-\beta/(1-\lambda)}\right)
\end{align*}
Case 3:
\begin{align*}
    f_{0,(iii)} &= \alpha(1-\alpha)(\alpha+\gamma(1-\alpha)) \int_0^\beta x(1-\lambda)e^{-(1-\lambda)x}dx \\ 
    &=\alpha(1-\alpha)(\alpha+\gamma(1-\alpha)) \left(1-\lambda - \left(\beta + 1-\lambda\right)e^{-\beta/(1-\lambda)}\right)
\end{align*}
Given $k$ i.i.d. exponential random variables with rate $1/(1-\lambda)$, we have the sum of as $\text{Erlang}(k, 1/(1-\lambda))$, which has an expected value of $k(1-\lambda)$. Thus $\mathbb{E}_r[R(t)]$ for a length $k$ path is $k(1-\lambda)$. 
\paragraph*{Calculating $f_1$}
Using the definition of $f_1$,
\begin{align*}
    f_1 =& \alpha \int_0^{\infty} 1/(1-\lambda)e^{-t/(1-\lambda)} \mathbb{E}_{r}[R(t)]dt \\
    &+ \alpha(1-\alpha) \int_{0}^\infty 1/(1-\lambda)^2te^{-t/(1-\lambda)}  \mathbb{E}_{r}[R(t)] dt \\ 
    =& \left(1-\lambda\right)\cdot (\alpha + 2 \alpha(1-\alpha))
\end{align*}
Generalizing the above and following \Cref{thm:attackgeq2}, we have have
\begin{align*}
    f_{i\geq 2} &= \sum_{j=0}^{i-1} \left[\alpha (1-\alpha)^j \int_{0}^\infty \frac{t^j e^{-t/(1-\lambda)}}{(1-\lambda)^{j+1}j!}\mathbb{E}_{r}[R(t)]dt\right]\\
    &= \left(1-\lambda\right)\cdot \sum_{j=0}^{i-1} \alpha (1-\alpha)^j (j+1) \\
    &= \left(1-\lambda\right) \cdot \left(\frac{1-(i+1)(1-\alpha)^i+i(1-\alpha)^{i+1}}{\alpha}\right)
\end{align*}
Thus for the full attacker reward (\Cref{def:fullreward}), we have
\begin{align}\label{eq:fullattacklinearonly}
    &\text{ATTACKER REWARD} \\
    &\qquad= f_0 p_0 + f_1 p_1 \nonumber\\
    +& \alpha p_1 \sum_{i=2}^\infty \left(1-\lambda\right)\left[\left(\frac{1-(i+1)(1-\alpha)^i+i(1-\alpha)^{i+1}}{\alpha}\right) \cdot \left(\frac{\alpha}{1-\alpha}\right)^{i-2}\right] \nonumber\\ 
    &\qquad = f_0 p_0 + f_1 p_1 + p_1\left(1-\lambda\right) \cdot \frac{\alpha^2(3-2\alpha)}{1-2\alpha}
\end{align}
\Cref{fig:linear-only} shows the resulting rewards compared to the analytical result from Appendix~E.2 of \cite{carlsten2016instability}. These values match to machine precision.

\section{Worked example with only block rewards}\label{app:block-rews-only}
Consider the attacker maximizing only for their fraction of the block rewards as in \cite{eyal2013majority}. This ``purely selfish miner'' uses $\beta \to \infty$ as their $\beta$-cutoff strategy, such that they \textit{always} hide blocks mined in \statezeronosp.
With $R(t)=C$, the reward CDF (\Cref{def:cdf}) is simply,
\begin{align*}
    F_t(x) = 
    \begin{cases}
    1 & \text{if } C < x \\
    0 & \text{otherwise }
    \end{cases}
\end{align*}
Using the CDF, we derive the transition probabilities, which impact the stationary distribution (\Cref{def:stationary}) while taking the limit as $\beta \to \infty$, which simplifies the Markov Chain to Figure~1 in \cite{eyal2013majority},
\begin{align*}
    \Pr[\statezeronosp &\rightarrow \stateonenosp] \\
    &= \lim_{\beta\to\infty} \left(\alpha \int_0^\infty 1/(1-\lambda)e^{-t/(1-\lambda)} F_t(\beta) dt \right)\\   
    &= \alpha \\
    \Pr[\statezeronosp &\rightarrow \statezeronosp \land \text{attacker block}] \\
    &= \lim_{\beta\to\infty} \left(\alpha \int_0^\infty 1/(1-\lambda)e^{-t/(1-\lambda)} (1-F_t(\beta)) dt \right)\\   
    &= 0.
\end{align*}

\begin{figure}
    \centering
    \includegraphics[width=0.7\linewidth]{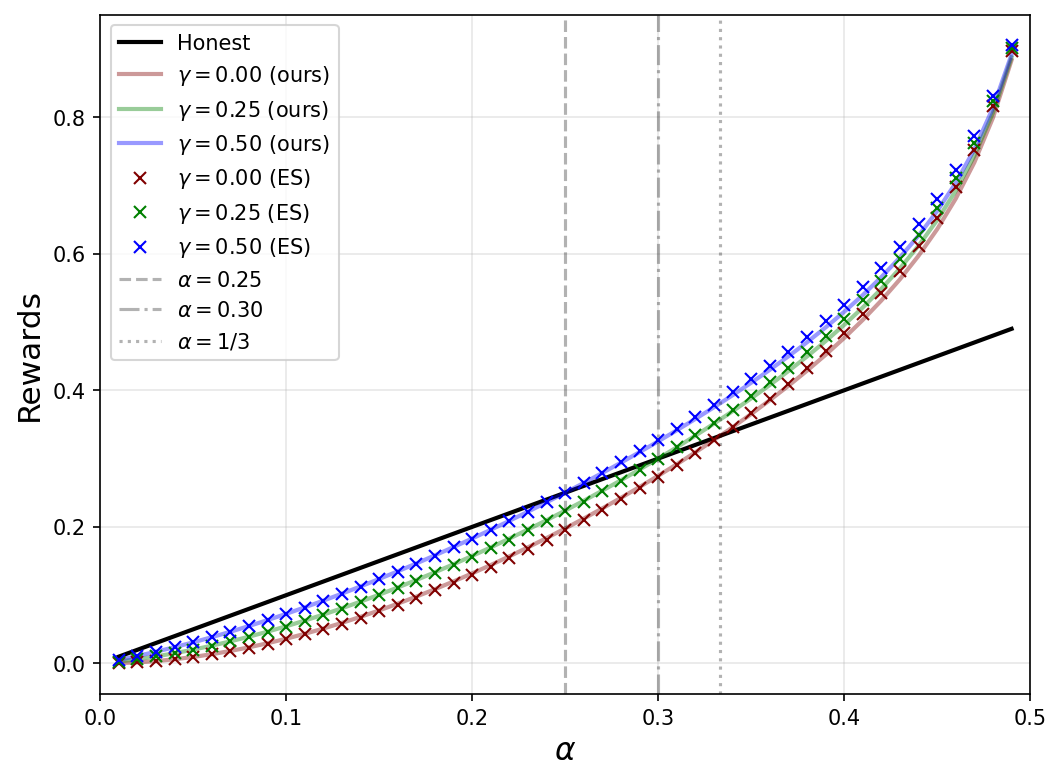}
    \caption{Comparing our analytic results (\Cref{eq:fullattackblockonly} (colored lines labeled with \texttt{ours}) with Equation~8 from \cite{eyal2013majority} (shown as \texttt{x}s labeled with \texttt{ES}). We show values for $\gamma=0,0.25,0.5$, various values of $\alpha$, and vertical lines at $0.25, 0.3,1/3$ (where $\gamma=0,0.25,0.5$ selfish mining respectively becomes profitable). The slight deviation at higher values of $\alpha$ arises from the introduction of \texttt{State 0''} (as in \cite{carlsten2016instability}).} 
    \label{fig:block-rew-only}
\end{figure}

\noindent We now derive the three cases for \statezeronosp. Case 1:
\begin{align*}
    f_{0,(i)} &= C\alpha.
\end{align*}
Case 2:
\begin{align*}
    f_{0,(ii)} &= C\alpha^2.
\end{align*}
Case 3:
\begin{align*}
    f_{0,(iii)} &= C \alpha(1-\alpha)(\alpha+\gamma(1-\alpha)).
\end{align*}
Since the block reward is constant at $C$, $\mathbb{E}_r[R(t)]=C$ for any length $k$ (recall that the attacker paths as defined in \Cref{ex:state3paths} each only have a \textit{single} attacker block).  
\paragraph*{Calculating $f_1$}
Using the definition of $f_1$,
\begin{align*}
    f_1 =& \alpha \int_0^{\infty} 1/(1-\lambda)e^{-t/(1-\lambda)} \mathbb{E}_{r}[R(t)]dt + \alpha(1-\alpha) \int_{0}^\infty 1/(1-\lambda)^2te^{-t/(1-\lambda)}  \mathbb{E}_{r}[R(t)] dt \\ 
    =& C \cdot (\alpha + \alpha(1-\alpha))
\end{align*}
Generalizing the above and following \Cref{thm:attackgeq2}, we have have
\begin{align*}
    f_{i\geq 2} &= \sum_{j=0}^{i-1} \left[\alpha (1-\alpha)^j \int_{0}^\infty \frac{t^j e^{-t/(1-\lambda)}}{(1-\lambda)^{j+1}j!}\mathbb{E}_{r}[R(t)]dt\right]\\
    &= C \cdot \sum_{j=0}^{i-1} \alpha (1-\alpha)^j \\
    &= C \cdot (1-(1-\alpha)^i).
\end{align*}
Thus for the full attacker reward (\Cref{def:fullreward}), we have
\begin{align}\label{eq:fullattackblockonly}
    \text{ATTACKER REWARD} =& f_0 p_0 + f_1 p_1 + \alpha p_1 \sum_{i=2}^\infty C\left[ (1-(1-\alpha)^i)\cdot \left(\frac{\alpha}{1-\alpha}\right)^{i-2}\right] \nonumber\\ 
    =& f_0 p_0 + f_1 p_1 + p_1 C \cdot \frac{2\alpha^2(1-\alpha)}{1-2\alpha}
\end{align}
\Cref{fig:block-rew-only} shows the resulting rewards compared to Equation~8 of \cite{eyal2013majority}. These values match nearly exactly. The slight deviation at higher values of $\alpha$ arises from the introduction of \texttt{State 0''} (as in \cite{carlsten2016instability}), which forces the attacker to mine honestly for a single block after publishing their private chain.

\end{document}